\documentclass{amsart}

\makeatletter
\renewcommand\@biblabel[1]{#1.}
\makeatother

\usepackage{amsthm,amssymb,amsmath,calc,eucal,ifthen}
\usepackage{amscd}
\usepackage[all,arc,curve,color,frame,matrix,arrow]{xy}

\numberwithin{equation}{section}

\newtheorem{theorem}{Theorem}[section]
\newtheorem{lemma}[theorem]{Lemma}
\newtheorem{proposition}[theorem]{Proposition}

\theoremstyle{remark}

\newtheorem{example}[theorem]{Example}

\newtheoremstyle{rmdefinition}{}{}{\upshape}{}{\bfseries}{.}{ }{}

\theoremstyle{rmdefinition}
\newtheorem{definition}[theorem]{Definition}

\newcommand{\addresss}{\begin{center}\small LaGuardia Community College of The City University of New York,\\
MEC Department, 31-10 Thomson Ave.\\
Long Island City, NY 11101, U.S.A.;\\szahariev@lagcc.cuny.edu \end{center}}

\newcommand{\be}[1]{\begin{equation}\label{#1}}
\newcommand{\ee}{\end{equation}}
\newcommand{\beqa}{\begin{eqnarray}}
\newcommand{\eeqa}{\end{eqnarray}}

\newcommand{\smtwo}[2]{\mathop{\sum}\limits_{
	\mathop{}\limits^{\text{\scriptsize $#1$}}_{
	  \text{\scriptsize $#2$}}}}
\def\ss{\mathrm{s}}

\newcounter{tmpc}
\newlength{\tmplenght}
\setlength{\tmplenght}{31pt}
\newlength{\tmplenghta}
\newlength{\tmplenghtb}
\newlength{\tmplenghtc}

%%%%%% MATHBOLD STYLE

\begin{document}

\title{Curved $A_{\infty}$-algebras and gauge theory}

\author{SVETOSLAV ZAHARIEV}
\date{}

\maketitle

\addresss

%\noindent{{\small {\bf Abstract.}
\begin{abstract} We propose a general notion of algebraic gauge theory obtained via extracting the main properties of classical gauge theory. Building on a recent work on transferring curved $A_{\infty}$-structures we show that, under certain technical conditions, algebraic gauge theories can be transferred along chain contractions. Specializing to the case of the contraction from differential forms to cochains, we obtain a simplicial gauge theory on the matrix-valued simplicial cochains of a triangulated manifold. In particular, one obtains discrete notions of connection, curvature, gauge transformation and gauge invariant action.
\end{abstract}
%}}\\

\medskip
\noindent {\bf Mathematics Subject Classification 2010:} 70S15, 55U99.

\medskip
\noindent {\bf Keywords:} $A_{\infty}$-algebra, Gauge theory, Chain contraction

\section{Introduction}
Lattice gauge theory has been a widely used tool in computational physics for several decades and has also been generalized in various directions in the mathematical literature, e.g. to general simplicial complexes \cite{PS}. It is however of considerable interest to develop alternative discretization schemes in which all fundamental differential geometric and algebraic structures present in classical gauge theory have direct and transparent analogues. In this paper we propose a general algebraic framework for gauge theory and apply it to obtain a simplicial gauge theory on simplicial cochains, based on the notion of curved $A_{\infty}$-algebra.

We recall that $A_{\infty}$-algebras \cite{S} are a generalization of differential graded (dg) algebras in which the associativity condition is replaced by an infinite sequence of identities involving higher ``multiplications''. Curved $A_{\infty}$-algebras (cf. \cite{GJ} and \cite{N}) are a natural generalization of both  $A_{\infty}$-algebras and curved dg algebras \cite{Po}. The latter are generalizations of dg algebras which extract the algebraic properties of connections and curvatures on vector bundles.

It was recently shown in \cite{NZ} that under certain technical conditions one can transfer along chain contractions curved dg structures to curved $A_{\infty}$-structures, building on previous transfer results (see e.g. \cite{HK} and \cite{KS}). Applying this to Dupont's contraction from differential forms to cochains \cite{Du}, one obtains transferred notions of connection and curvature on the matrix-valued simplicial cochains of a triangulated manifold. The main theme of the present paper is transferring the remaining ingredients of classical gauge theory, namely gauge transformations, inner product and gauge-invariant action, to simplicial cochains. In order to approach this problem, we introduce a general notion of {\em algebraic gauge theory} which we explain in more detail below.

In Section 2 we present background material on curved $A_{\infty}$-algebras and their morphisms and discuss gauge transformations in dg algebras. In Section 3 we define algebraic gauge theory in the particular case of curved dg structures and provide three main examples: classical gauge theory on a smooth manifold, non-commutative gauge theories arising from A. Connes' spectral triples and gauge theories defined by deformations of dg algebras. Section 4 is devoted to certain properties of special chain contractions and auxiliary results related to the homological perturbation lemma that are needed in the sequel.

In Section 5 we define an algebraic gauge theory over an arbitrary groupoid as a functor from the groupoid to a category whose objects are triples consisting of a graded vector space $V$, a preferred class of curved $A_{\infty}$-structures on $V$ and an inner product on the tensor coalgebra of $V$, and whose morphisms are tensor coalgebra maps preserving the additional structure. We then prove our main result, Theorem \ref{dgt2}, which can be formulated as follows: the algebraic gauge theory given by the deformations of a fixed dg algebra can be transferred along a special chain contraction provided that its homotopy commute with the gauge group action. Moreover, there exists a natural transformation between the initial and the transferred functor. Applying this transfer result to Dupont's contraction we obtain, in the case of a trivial bundle, a simplicial gauge theory on the matrix-valued simplicial cochains of a triangulated manifold.

{\bf Acknowledgment.} The author is grateful to Nikolay M. Nikolov for many stimulating discussions and helpful suggestions.

\section{Curved $A_{\infty}$-algebras and their morphisms}
 We work over a fixed ground field $\textbf{k}$ of characteristic 0; all chain complexes have differentials of degree +1. In what follows $V=\bigoplus_{p}V_{p}$ always stands for a graded vector space over $\textbf{k}$, its {\em suspension} $\text{s}V$ is defined by $(\text{s}V)_{p}=V_{p+1}$. We write $V^{\otimes{k}}$ for the $k$-th graded
tensor power of $V$ so that $V^{\otimes{0}}=\textbf{k}$. We denote by
$$T(sV)=k\oplus \ss V \oplus \ss V ^{\otimes2}\oplus\ldots$$
the full tensor coalgebra over the shifted space $\ss V$. We always consider $T(\ss V)$ as a graded vector space with respect to the total grading  induced by the grading on $V$ and the grading given by the number of factors in the tensor product. When applying graded maps to graded objects, we always use the {\it Koszul sign rule},
by which we mean the appearance of a sign $ (-1)^{\text{deg}(a)\,\text{deg}(b)}$
when switching two adjacent graded symbols $a$ and $b$.

\begin{definition} A {\em curved $A_{\infty}$-structure} on a graded vector space $V$ is a degree 1 graded coderivation
$D$ on $T(\ss V)$ such that $D^{2}=0$. The pair $(V,D)$ is called a {\em curved $A_{\infty}$-algebra}.
\end{definition}
The following proposition provides a useful explicit description of curved $A_{\infty}$-structures.
\begin{proposition}\label{cainfp} (cf. \cite{GJ}) There is a bijection between curved $A_{\infty}$-structures
on $V$
and collections of linear maps $\{m_{k}\}_{k=0}^{\infty}$ such that
$$
{m_{k}}:
V^{\otimes{k}}\rightarrow V
$$
is of degree $2-k$, satisfying for every $n\geqslant 0$ the identity
\be{Ainfide}
\sum_{r+s+t \, = \, n}(-1)^{r+st} \, m_{r+t+1}
\left(\textbf{1}_{V}^{\otimes{r}}\otimes m_{s} \otimes \textbf{1}_{V}^{\otimes{t}}\right) \, = \, 0 \,.
\ee

\end{proposition}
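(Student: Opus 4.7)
The plan is to exploit the fact that the tensor coalgebra $T(\ss V)$ is cofree as a coassociative counital coalgebra in the graded sense, so that any graded coderivation $D$ on $T(\ss V)$ is uniquely determined by its corestriction $\pi\circ D$, where $\pi\colon T(\ss V)\to \ss V$ is the canonical projection onto cogenerators. More precisely, writing $D_{k}\colon (\ss V)^{\otimes k}\to \ss V$ for the component of $\pi\circ D$ on $(\ss V)^{\otimes k}$, the coderivation $D$ is recovered by the formula
\[
D\bigl|_{(\ss V)^{\otimes n}} \;=\; \sum_{r+s+t\,=\,n} \textbf{1}_{\ss V}^{\otimes r}\otimes D_{s}\otimes \textbf{1}_{\ss V}^{\otimes t}.
\]
This yields a bijection between degree $1$ graded coderivations on $T(\ss V)$ and collections $\{D_{k}\}_{k\ge 0}$ of degree $1$ maps $D_{k}\colon (\ss V)^{\otimes k}\to \ss V$. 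The first step of the proof is therefore to recall (or briefly verify) this cofreeness statement in the graded setting.

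Next I would use the suspension isomorphism $\ss\colon V\to \ss V$ of degree $-1$ to transport each $D_{k}$ to a map $m_{k}\colon V^{\otimes k}\to V$ via
\[
m_{k} \;=\; \ss^{-1}\circ D_{k}\circ \ss^{\otimes k}.
\]
A degree count shows that since $D_{k}$ has degree $1$ and $\ss^{\otimes k}$ has degree $-k$, the map $m_{k}$ has degree $1-1+k\cdot(-1)\cdot(-1)\cdots$, which when carefully tracked gives degree $2-k$, as required. This sets up the bijection between coderivations and the collections $\{m_{k}\}$.

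The core of the proof is the translation of the condition $D^{2}=0$ into the identity \eqref{Ainfide}. Since $D^{2}$ is itself a coderivation (of degree $2$), it vanishes if and only if $\pi\circ D^{2}=0$, i.e.\ if and only if each of its corestrictions on $(\ss V)^{\otimes n}$ vanishes. Expanding $\pi\circ D^{2}$ on $(\ss V)^{\otimes n}$ using the explicit formula above gives
\[
\sum_{r+s+t\,=\,n} D_{r+t+1}\bigl(\textbf{1}_{\ss V}^{\otimes r}\otimes D_{s}\otimes \textbf{1}_{\ss V}^{\otimes t}\bigr) \;=\; 0.
\]
Conjugating this identity by $\ss^{\otimes n}$ to pass to the maps $m_{k}$ produces the signs $(-1)^{r+st}$ via the Koszul rule: the factor $(-1)^{r}$ comes from commuting $\ss^{-1}$ past $r$ copies of $\textbf{1}_{\ss V}$ (each of degree $0$ but paired with the degree $-1$ symbol $\ss^{-1}$), while $(-1)^{st}$ arises from commuting the degree-$(2-s)$ internal map past $t$ suspension symbols, and all contributions from identity tensor factors cancel as in the standard computation.

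The main obstacle is precisely this sign bookkeeping. To handle it cleanly I would introduce the identity $\ss^{\otimes k}=(\ss\otimes \textbf{1})(\textbf{1}\otimes \ss\otimes \textbf{1})\cdots$ and apply the Koszul rule one swap at a time, tabulating the parities. Once the sign $(-1)^{r+st}$ is verified in one representative case, the full identity \eqref{Ainfide} follows, and the two directions of the bijection are manifest from the construction. No further work is needed beyond checking naturality of the correspondence, which is immediate from the cofree property of $T(\ss V)$.
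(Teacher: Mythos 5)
Your argument is correct and is essentially the route the paper itself takes (and the one in Getzler--Jones, Lemma 1.3, to which the paper defers the details): a coderivation on the tensor coalgebra is determined by its corestriction components $D_{k,1}$, the suspension converts these into maps $m_{k}$ of degree $2-k$, and $D^{2}=0$ is equivalent to the vanishing of the corestriction of the coderivation $D^{2}$, which after desuspension is exactly the identity (\ref{Ainfide}) with its Koszul signs. The only quibble is terminological: $T(\ss V)$ is cofree only among conilpotent coalgebras, but the property you actually use --- that a coderivation is determined by, and can be reconstructed from, an arbitrary family of maps $(\ss V)^{\otimes k}\rightarrow \ss V$ (including the $k=0$ component giving the curvature) --- does hold for the tensor coalgebra as stated.
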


The relationship between a curved $A_{\infty}$-structure $D$ and the maps $\{m_{k}\}$ is obtained as follows. Setting $$D_{n,k} \, := \, pr_{(\ss V)^{\otimes k}} \, \circ \, D \bigl|_{(\ss V)^{\otimes n}}:(\ss V)^{\otimes n} \rightarrow (\ss V)^{\otimes k},$$
one has
$$D_{n,k} = \,\sum_{\mathop{}\limits^{r+s+t \, = \, n}_{r+1+t=k}} \, \textbf{1}_{V}^{\otimes{r}}\otimes D_{s,1} \otimes \textbf{1}_{V}^{\otimes{t}},$$
where $D_{k,1}$ coincides with $m_k$ up to a sign (see \cite[Lemma 1.3]{GJ} for details).

We shall refer to $m_{0}(1)=D1$ (where 1 denotes the unit of $\textbf{k}$) as the {\em curvature} and  to  $m_{1}$  as the {\em connection} (or covariant derivative) of a given curved $A_{\infty}$-structure.

The notion of curved $A_{\infty}$-algebra is a generalization of the notions of $A_{\infty}$-algebra introduced by Stasheff in
\cite{S} and of curved differential graded (dg) algebra introduced by Positsel'skii in \cite{Po}. More specifically, an
{\em $A_{\infty}$-algebra structure} is given by a collection of maps $\{m_{k}\}_{k=0}^{\infty}$ as in Proposition \ref{cainfp} such that $m_{0}=0$ and a {\em curved dg algebra structure} is given by a collection of maps such that  $m_{k}=0$ for $k>2$.  Thus, a curved dg structure is defined by the first four identities in  (\ref{Ainfide}) which in this case read
\be{cdg1}
m_{1} m_{0} \, = \, 0 \,,
\ee
\be{cdg2}
m_{1} m_{1} \, = \, m_{2} (m_{0} \otimes \textbf{1})-m_{2}(\textbf{1} \otimes m_{0}) \,,
\ee
\be{cdg3}
m_{1} m_{2} \, = \, m_{2}(m_{1} \otimes \textbf{1})+m_{2}(\textbf{1} \otimes m_{1}) \,,
\ee
\be{cd4}m_{2} (m_{2} \otimes \textbf{1}) \, = \, m_{2}(\textbf{1} \otimes m_{2}) \,.
\ee
We note that (\ref{cdg1}) can be interpreted as an abstract Bianchi identity,
(\ref{cdg2}) says that square of the connection $m_{1}$ equals a commutator with
the curvature, (\ref{cdg3}) expresses the fact that $m_{1}$ is a (graded) derivation
and the last identity amounts to the associativity of multiplication given by $m_{2}$. Examples of curved dg algebras arising from vector bundles and projective modules will be described in Section 3.

\begin{definition}\label{deftmo} A {\em morphism} between two curved $A_{\infty}$-algebras $(V,D)$ and $(V',D')$ is a coalgebra map
$F: T(sV)  \rightarrow T(sV')$ which is a chain map, i.e. $FD=D'F$.
\end{definition}

\begin{proposition}\label{prmor} (cf. \cite{GJ},\cite{N}) There is a bijection between morphism of curved $A_{\infty}$-algebras $F: (V, \{m^{V}_{k}\}) \rightarrow (V', \{m^{V'}_{k}\})$ and collections of linear maps $\{F_{k}\}_{k=1}^{\infty}$ with
$F_{k}:V^{\otimes{k}}\rightarrow V'$ of degree $1-k$,
satisfying
$$
F_{1} m_{0}^{V} \, = \, m_{0}^{V'}\,,
$$
and for every $n>0$ the identity
$$
\sum_{r+s+t \, = \, n}\!(-1)^{r+st} \, F_{r+t+1} \,\, (\textbf{1}^{\otimes{r}}\otimes m^V_{s} \otimes \textbf{1}^{\otimes{t}})
$$
$$
\, = \,
\hspace{-7pt}
\smtwo{1 \, \leqslant \, q \, \leqslant \, n}{i_{1}+ \ldots + i_{q} \, = \, n}
\hspace{-9pt}
(-1)^{w} \, m^{V'}_{q} \,\, (F_{i_{1}}\otimes \ldots \otimes F_{i_{q}})\,,
$$
where
%$i_{1}+ \ldots + i_{q}=n$ and
$$w=\sum_{2 \leq \ell \leq q}\Bigl((1-i_{\ell})\sum_{1\leq k\leq \ell - 1}i_{k}\Bigr).$$
\end{proposition}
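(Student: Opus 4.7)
The plan is to establish the bijection via the cofreeness of the tensor coalgebra, and then translate the chain condition $FD = D'F$ component-by-component, as in the classical (uncurved) case treated in \cite{GJ} and \cite{N}. Recall that $T(\ss V')$, equipped with the deconcatenation coproduct, is the cofree conilpotent coassociative coalgebra over $\ss V'$; hence every counit-preserving coalgebra map $F: T(\ss V) \to T(\ss V')$ is uniquely determined by its composition with the projection $\pi': T(\ss V') \to \ss V'$. Decomposing by arity, $F$ corresponds bijectively to a sequence of linear maps $\widetilde F_k := \pi' \circ F\bigl|_{(\ss V)^{\otimes k}}: (\ss V)^{\otimes k} \to \ss V'$ of degree $0$ for $k \geqslant 1$ (the arity-zero component being fixed by $F(1) = 1$). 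Passing to the desuspensions yields the family $\{F_k\}$ of the stated degree $1-k$, and any such family reconstructs a unique $F$ via the universal property.

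For the chain condition, I would apply $\pi'$ to both sides of $FD = D'F$ and restrict to $(\ss V)^{\otimes n}$. Cofreeness once more guarantees that this weaker identity in fact implies $FD = D'F$, so it suffices to verify it on the first tensor factor of the target. For $n=0$, the only contribution comes from $D(1) = m_0^{V}(1) \in \ss V$, on which $\pi'F$ restricts to $\widetilde F_1$, while $\pi' D'(1) = m_0^{V'}(1)$; after desuspending this yields the identity $F_1 m_0^{V} = m_0^{V'}$.

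For $n \geqslant 1$, the left-hand side is computed by expanding $D\bigl|_{(\ss V)^{\otimes n}}$ via Proposition \ref{cainfp} as the sum over $r+s+t = n$ of $\textbf{1}^{\otimes r} \otimes D_{s,1} \otimes \textbf{1}^{\otimes t}$, which lands in $(\ss V)^{\otimes (r+1+t)}$; applying $\pi' F$ on this summand reduces to $\widetilde F_{r+1+t}$, and desuspending reproduces the factor $(-1)^{r+st}$ already encoded in the $D_{s,1} \leftrightarrow m_s^{V}$ translation recorded in Proposition \ref{cainfp}. The right-hand side is computed by observing that the component of $F\bigl|_{(\ss V)^{\otimes n}}$ mapping into $(\ss V')^{\otimes q}$ is the direct sum over compositions $i_1 + \cdots + i_q = n$ of $\widetilde F_{i_1} \otimes \cdots \otimes \widetilde F_{i_q}$, and that $\pi' D'$ restricted to $(\ss V')^{\otimes q}$ is precisely $D'_{q,1}$, which corresponds to $m_q^{V'}$.

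The main obstacle is the sign bookkeeping that produces the exponent $w$. When the tensor product $\widetilde F_{i_1} \otimes \cdots \otimes \widetilde F_{i_q}$ of degree-zero maps is rewritten in terms of the desuspensions, each $F_{i_\ell}$ of degree $1 - i_\ell$ must be Koszul-commuted past the desuspension operators $s^{-1}$ (of degree $+1$) corresponding to the earlier blocks of sizes $i_1, \dots, i_{\ell-1}$, contributing a sign $(-1)^{(1-i_\ell)\sum_{k<\ell} i_k}$; summing these exponents over $\ell = 2, \dots, q$ yields precisely $w$. This step is the only non-formal ingredient of the proof; it is standard in the uncurved $A_{\infty}$ literature \cite{GJ}, and the curved extension is unaffected by the presence of $m_0$ and $m_0^{V'}$ beyond producing the additional identity $F_1 m_0^{V} = m_0^{V'}$ from the arity-zero component.
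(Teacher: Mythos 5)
Your argument is correct and coincides with the standard proof that the paper itself defers to \cite{GJ} and \cite{N}: cofreeness of the tensor coalgebra reduces both the classification of coalgebra maps and the chain condition $FD=D'F$ to their corestrictions onto $\ss V'$, and the component decomposition $F_{n,k}=\sum F_{i_1,1}\otimes\cdots\otimes F_{i_k,1}$ you obtain, together with the desuspension bookkeeping producing the sign $w$, is exactly what the paper records immediately after the proposition. No gaps.
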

The relationship between a morphism $F$  and its components $\{F_k\}$ is obtained as follows. Setting $$F_{n,k} \, := \, pr_{(\ss V')^{\otimes k}}  F \bigl|_{(\ss V)^{\otimes n}}:(\ss V)^{\otimes n} \rightarrow (\ss V')^{\otimes k},$$
one has
$$F_{n,k} = \hspace{-10pt}\sum_{i_{1}+ \ldots + i_{k} \, = \, n}\hspace{-7pt} F_{i_{1},1}\otimes \ldots \otimes F_{i_{k},1}\,,$$
where $F_{k,1}$ coincides with $F_k$ up to a sign obtained by the Koszul sign rule. The components of the composition of two morphisms $F=\{F_{k}\}_{k=1}^{\infty}$ and $G=\{G_{k}\}_{k=1}^{\infty}$ are given by the formula (cf. \cite[Section 4]{N})

\be{compmor}
(G  F)_{p}=\hspace{-7pt}\smtwo{1 \, \leqslant \, q \, \leqslant \, p}{i_{1}+ \ldots + i_{q} \, = \, p}
\hspace{-9pt}
(-1)^{w} \, G_{q} \,\, (F_{i_{1}}\otimes \ldots \otimes F_{i_{q}})\, .
\ee

\begin{example}\label{gacdg} (Gauge transformations in a dg algebra) Let $A$ be a unital dg algebra with differential $d$. Every  $\gamma \in A$ of degree 1 defines a curved dg structure $D_{\gamma}$ on $A$ given by
$$m_{0}^{\gamma}(1)=d\gamma+\gamma^{2}, \quad m_{1}^{\gamma}(a)=da+ [\gamma, a],
\quad m_{2}^{\gamma}(a\otimes b)=ab, \quad a,b \in A.$$
We denote by $G_{A}$ the multiplicative group of all degree 0 invertible elements of $A$. For every $g \in G_{A}$ and $a \in A$
we set $c(g)(a)=g a g^{-1}$ and $\gamma'=g\gamma g^{-1}+gdg^{-1}$. One easily checks that every $g \in G_{A}$ defines a morphism $F^{g}$ from  $(A,D_{\gamma})$ to
$(A,D_{\gamma' })$ whose components are given by
$$ F^{g}_{1}=c(g), \quad F^{g}_{k}=0, k>1,$$
and, using (\ref{compmor}), that the assignment $g \mapsto F^{g}$ defines a representation of $G_{A}$ on $T(sA)$.

\end{example}

\section{Algebraic gauge theory}

We denote by $\text{Aut}(T(V))$ the space of the coalgebra automorphisms of $T(V)$ and
by $\mathfrak{A}_{V}$ the set of all curved dg structures on $V$. It what follows, by inner product we mean a symmetric non-degenerate bilinear form, and by unitary operator, an invertible inner product preserving map. A {\em graded} inner product on a graded vector space $V$ is an inner product on $V$ such that $V_{i}\perp V_{j}$ for $i\neq j$.

\begin{definition}\label{agt} Let $G$ be a group and let $V$ be a graded vector space. An {\em algebraic gauge theory} over $G$ with target $V$ is a triple $(\mathfrak{C},\langle \cdot,\cdot \rangle,\rho)$, where

(1) $\mathfrak{C}$ is a subset of $\mathfrak{A}_{V}$.

(2)  $\langle \cdot,\cdot \rangle$ is a graded inner product on $T(sV)$,

(3) $\rho:G \rightarrow \text{Aut}(T(sV))$ is a unitary representation of $G$ on $T(sV)$ via coalgebra maps such that for every $g \in G$ and every $D \in \mathfrak{C}$ one has $\rho(g)D\rho(g^{-1})\in \mathfrak{C}$ and
$\rho(g)1=1$.

\end{definition}

Motivated by Example \ref{gacdg} and following the standard physics terminology, we shall call the elements of $\rho(G)$ {\em gauge transformations} and the map $S:\mathfrak{C} \rightarrow  \textbf{k}$ given by $$S(D)=\langle D1, D1 \rangle, \quad D \in \mathfrak{C}$$ the {\em action functional} of the theory.

\begin{proposition} The action functional $S$ is invariant under gauge transformations.
\end{proposition}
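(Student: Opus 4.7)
The proof is essentially a direct unwinding of the three conditions in Definition \ref{agt}. Fix $g \in G$ and $D \in \mathfrak{C}$, and denote the transformed structure by $D^{g} := \rho(g) D \rho(g^{-1})$, which belongs to $\mathfrak{C}$ by condition (3). The plan is to compute $S(D^{g}) = \langle D^{g}1, D^{g}1 \rangle$ and reduce it to $\langle D1, D1\rangle$ in two short steps.

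First I would exploit the condition $\rho(g)1 = 1$ imposed in Definition \ref{agt}(3). Since this holds for every element of $G$, applying it to $g^{-1}$ gives $\rho(g^{-1})1 = 1$, so
\[
D^{g}(1) = \rho(g) D \rho(g^{-1})(1) = \rho(g) D(1).
\]
Second I would invoke the unitarity of $\rho(g)$ with respect to $\langle \cdot, \cdot \rangle$, which is exactly the content of the first half of condition (3). This yields
\[
S(D^{g}) = \langle \rho(g) D(1), \rho(g) D(1) \rangle = \langle D(1), D(1) \rangle = S(D),
\]
which is the desired invariance.

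There is no real obstacle here; the statement is a tautological consequence of the way the definition has been set up. The only subtlety worth flagging explicitly in the write-up is that, although condition (3) only directly asserts $\rho(g)1 = 1$, this automatically propagates to $\rho(g^{-1})1 = 1$ because $\rho$ is a representation of a group, so both ``halves'' of the conjugation $\rho(g)D\rho(g^{-1})$ act trivially on the unit $1 \in \textbf{k} \subset T(sV)$ and the computation reduces to a single application of unitarity.
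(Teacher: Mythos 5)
Your proof is correct and follows exactly the same two-step computation as the paper: first use $\rho(g^{-1})1=1$ to reduce $D^{g}1$ to $\rho(g)D1$, then apply unitarity of $\rho(g)$. The extra remark about the condition propagating from $g$ to $g^{-1}$ is a reasonable clarification but does not change the argument.
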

\begin{proof} For every $g \in G$ one has
$$S(\rho(g)D\rho(g^{-1}))=\langle \rho(g)D\rho(g^{-1})1, \rho(g)D\rho(g^{-1})1 \rangle =$$
$$=\langle \rho(g)D1, \rho(g)D1 \rangle=\langle D1, D1 \rangle=S(D).$$
\end{proof}

In the remaining part of this section we shall present three examples of algebraic gauge theory. We begin with the following simple observation.
\begin{lemma}\label{norrep} Let $\alpha: G \rightarrow \text{Aut}(V)$ be a representation of a group $G$ on a inner product vector space $V$. Assume that the adjoint operator $\alpha^{*}(g)$ exists for every $g \in G$ and that the commutator $[\alpha(g_{1}),\alpha^{*}(g_{2})]$ is equal to 0 for all $g_{1},g_{2} \in G$. Then $\rho=\alpha(\alpha ^{*})^{-1}$ is a unitary representation of $G$.

\end{lemma}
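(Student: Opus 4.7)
The plan is a direct algebraic verification in two steps: show that $\rho$ is a group homomorphism, then show that each $\rho(g)$ preserves the inner product.

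For the homomorphism property, I would start from $\rho(g_1g_2)=\alpha(g_1g_2)\,\alpha^*(g_1g_2)^{-1}$ and use the fact that $\alpha$ is a representation to write $\alpha(g_1g_2)=\alpha(g_1)\alpha(g_2)$, combined with the standard identity $(AB)^*=B^*A^*$ to obtain $\alpha^*(g_1g_2)=\alpha^*(g_2)\alpha^*(g_1)$, and hence $\alpha^*(g_1g_2)^{-1}=\alpha^*(g_1)^{-1}\alpha^*(g_2)^{-1}$. Comparing with $\rho(g_1)\rho(g_2)=\alpha(g_1)\alpha^*(g_1)^{-1}\alpha(g_2)\alpha^*(g_2)^{-1}$, the equality reduces to commuting $\alpha^*(g_1)^{-1}$ past $\alpha(g_2)$. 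The hypothesis $[\alpha(g_2),\alpha^*(g_1)]=0$ implies $[\alpha(g_2),\alpha^*(g_1)^{-1}]=0$ (multiplying the commutator on both sides by $\alpha^*(g_1)^{-1}$), so this step is immediate.

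For unitarity, I would compute $\rho(g)^*$ and $\rho(g)^{-1}$ separately and show they coincide. Using $(AB)^*=B^*A^*$ and the involutivity $(\alpha^*(g))^*=\alpha(g)$ (valid since the inner product is symmetric non-degenerate), one gets $\rho(g)^*=(\alpha^*(g)^{-1})^*\alpha^*(g)=\alpha(g)^{-1}\alpha^*(g)$. Meanwhile, $\rho(g)^{-1}=\alpha^*(g)\alpha(g)^{-1}$. The hypothesis $[\alpha(g),\alpha^*(g)]=0$ (the diagonal case $g_1=g_2=g$) yields $[\alpha(g)^{-1},\alpha^*(g)]=0$, so $\rho(g)^*=\rho(g)^{-1}$, which is exactly the condition for $\rho(g)$ to be an invertible inner-product-preserving map.

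There is no real obstacle here; the lemma is essentially a bookkeeping exercise built on the two standard identities $(AB)^*=B^*A^*$ and $(A^*)^*=A$ together with the stated commutativity hypothesis. The only point worth flagging is that one must invoke the commutation relation with $\alpha^*(g_i)^{-1}$ rather than $\alpha^*(g_i)$ directly, but this follows formally from the given relation. No appeal to topology or to the $A_\infty$ machinery of earlier sections is needed — the statement is purely about linear operators on an inner product space.
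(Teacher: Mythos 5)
Your proposal is correct and follows essentially the same route as the paper: verify the homomorphism property by expanding $\rho(g_1g_2)$ via the anti-homomorphism property of $\alpha^*$ and commuting one factor using the hypothesis, then verify unitarity by computing $\rho(g)^*$ and $\rho(g)^{-1}$ and matching them via the same commutation relation. The only cosmetic difference is that the paper writes the inverse adjoint as $\alpha^*(g^{-1})$ where you write $(\alpha^*(g))^{-1}$; these coincide since adjoints of the identities $\alpha(g)\alpha(g^{-1})=\alpha(g^{-1})\alpha(g)=1$ show $\alpha^*(g)$ is invertible with inverse $\alpha^*(g^{-1})$.
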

\begin{proof} We check that $\rho$ is a representation
$$\rho(g_{1}g_{2})=\alpha(g_{1}g_{2})\alpha ^{*}((g_{1}g_{2})^{-1})=
\alpha(g_{1})\alpha(g_{2})\alpha^{*}(g_{1}^{-1})\alpha^{*}(g_{2}^{-1})=$$
$$=\alpha(g_{1})\alpha^{*}(g_{1}^{-1})\alpha(g_{2})\alpha^{*}(g_{2}^{-1})=\rho(g_{1})\rho(g_{2}),$$
and that $\rho$ is unitary
$$\rho^{*}=((\alpha ^{*})^{-1})^{*}=\alpha ^{-1}\alpha ^{*}=\alpha ^{*}\alpha ^{-1}=\rho ^{-1}.$$

\end{proof}

\subsection{Classical gauge theory}
Our first example is classical gauge theory. Let $E$ be a real or complex smooth vector bundle over a Riemannian manifold $M$. Denote by $\Omega^{\bullet}_{c}(M,E)$ the algebra of compactly supported differential forms on $M$ with values in $E$,
by $\Omega_{E}:=\Omega^{\bullet}_{c}(M,\text{End}(E))$ the algebra of compactly supported forms with values in the endomorphism bundle $\text{End}(E)$ and by $\mathfrak{A}_{E}$ the set of all connections on $E$.
For every $$\mathfrak{A}_{E} \ni \nabla : \Omega^{\bullet}_{c}(M,E) \rightarrow \Omega^{\bullet +1}_{c}(M,E)$$ one can define the induced connection  $$\widetilde{\nabla} : \Omega^{\bullet}_{c}(M,\text{End}(E)) \rightarrow \Omega^{\bullet +1}_{c}(M,\text{End}(E))$$ on $\text{End}(E)$.

We define a curved dg structure $D_{\nabla}=(m_{0}^{\nabla}, m_{1}^{\nabla},m_{2}^{\nabla})$ on $\Omega_{E}$ by setting
 $$m_{0}^{\nabla}=\nabla ^{2}, \quad m_{1}^{\nabla}=\widetilde{\nabla}$$
 and taking $m_{2}^{\nabla}$ to be the composition product on $\Omega_{E}$. Using the Riemannian structure on $M$, we define an inner product on $\Omega_{E}$ by
$$\langle \omega _{1}, \omega _{2} \rangle _{E}=\int _{M} \text{Tr}(\omega _{1}\wedge \ast \omega _{2}),
\quad \omega _{1}, \omega _{2} \in \Omega_{E}.$$

The latter induces a graded inner product on the tensor coalgebra $T(s\Omega)$ which we also denote by
$\langle \cdot, \cdot \rangle_{E}$.

Let $G_{E}$ be a subgroup of the sections of the automorphism bundle $\text{Aut}(E)$. We denote by
$\rho: G_{E} \rightarrow \text{Aut}(\Omega_{E})$ the representation of $G_{E}$ on $\Omega_{E}$ given by conjugation. The latter induces a representation $\rho_{E}: G_{E} \rightarrow \text{Aut}(T(s\Omega_{E}))$.

\begin{proposition}\label{cgt}  The triple  $(\{D_{\nabla}\}_{\nabla \in \mathfrak{A}_{E}},\langle \cdot ,\cdot \rangle_{E},\rho_{E})$ defines an algebraic gauge theory over $G_{E}$ with target $\Omega_{E}$.
\end{proposition}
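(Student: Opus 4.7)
The plan is to verify the three conditions of Definition~\ref{agt} together with the two extra compatibilities imposed on $\rho_{E}$.

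Conditions (1) and (2) amount to unwinding classical facts. For (1), the four identities (\ref{cdg1})--(\ref{cd4}) applied to $(m_{0}^{\nabla},m_{1}^{\nabla},m_{2}^{\nabla})=(\nabla^{2},\widetilde{\nabla},\wedge)$ read respectively as the Bianchi identity $\widetilde{\nabla}(\nabla^{2})=0$, the formula $\widetilde{\nabla}^{2}(\omega)=[\nabla^{2},\omega]$ expressing the square of the induced connection as a commutator with the curvature, the graded Leibniz rule for $\widetilde{\nabla}$ on $\Omega_{E}$, and the associativity of the composition product; all four are standard. For (2), the pairing $\langle\cdot,\cdot\rangle_{E}$ on $\Omega_{E}$ is graded since $\omega_{1}\wedge\ast\omega_{2}$ has top degree only when $\deg\omega_{1}=\deg\omega_{2}$; it is symmetric because $\alpha\wedge\ast\beta=\beta\wedge\ast\alpha$ holds for same-degree scalar forms and $\text{Tr}(AB)=\text{Tr}(BA)$ on $\text{End}(E)$; and it is non-degenerate as the pointwise coupling of the non-degenerate Riemannian pairing of forms with the non-degenerate trace form on $\text{End}(E)$. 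The induced pairing on $T(s\Omega_{E})$, defined by declaring distinct tensor powers orthogonal and using the tensor product pairing on each $(s\Omega_{E})^{\otimes k}$, is graded for the total grading by construction.

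For (3), $\rho_{E}(g)$ is a coalgebra automorphism of $T(s\Omega_{E})$ by functoriality of $T(s\,\cdot\,)$ applied to the automorphism $\rho(g)$ of $\Omega_{E}$, and in particular $\rho_{E}(g)1=1$. Invariance of $\{D_{\nabla}\}_{\nabla\in\mathfrak{A}_{E}}$ under conjugation follows from the classical gauge transformation law: conjugation by $g$ sends $D_{\nabla}$ to $D_{\nabla^{g}}$, where $\nabla^{g}=g\nabla g^{-1}$ is again a connection on $E$. For unitarity, the key computation is
\begin{equation*}
\int_{M}\text{Tr}\bigl((g\omega_{1}g^{-1})\wedge\ast(g\omega_{2}g^{-1})\bigr)=\int_{M}\text{Tr}(g\,\omega_{1}\wedge\ast\omega_{2}\,g^{-1})=\langle\omega_{1},\omega_{2}\rangle_{E},
\end{equation*}
which uses that $g$ is a $0$-form (so commutes past $\ast$, which acts only on the form factor) and pointwise cyclicity of the trace under $\int_{M}$. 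Since $\rho_{E}(g)$ acts on $(s\Omega_{E})^{\otimes k}$ as $\rho(g)^{\otimes k}$, unitarity of $\rho_{E}$ on $T(s\Omega_{E})$ follows.

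The main step to watch is the unitarity computation: one must confirm that $\ast$ commutes with multiplication by degree-$0$ sections of $\text{End}(E)$ and that the cyclicity of the trace applies with no Koszul sign obstruction (harmless here since $g$ has even form-degree). Everything else reduces to standard facts about connections on vector bundles, Hodge theory on Riemannian manifolds, and the functorial nature of the tensor coalgebra construction.
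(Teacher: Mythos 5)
Your proof is correct and follows essentially the same route as the paper's: the only substantive points the paper's proof checks are the unitarity of $\rho_{E}$ and the identity $\rho_{E}(g)D_{\nabla}\rho_{E}(g^{-1})=D_{\nabla^{g}}$, both of which you establish. The only cosmetic difference is that the paper obtains unitarity by citing Lemma~\ref{norrep} (left and right multiplication by $g$ are mutually adjoint and commute, so conjugation is unitary), whereas you inline the equivalent trace-cyclicity computation; you also spell out conditions (1) and (2), which the paper treats as standard.
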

\begin{proof}
We observe that the operators of left multiplication and right multiplication by an element of $G_{E}$ are adjoint and that the representation $\rho$ is given by the product of the left multiplication and the inverse of the right multiplication. Hence $\rho$ is unitary by Lemma \ref{norrep} and the same is true for $\rho_{E}$. For $g \in G_{E}$ we denote by $\nabla^{g}$ the usual action of $G_{E}$ on connections. A straightforward computation shows that $\rho_{E}(g)D_{\nabla} \rho_{E}(g^{-1})=D_{\nabla^{g}}$.
\end{proof}

\subsection{Spectral triples}
Our second example is based on A. Connes' spectral triples \cite{Co}. Let $\mathcal{A}$ be an associative algebra and let $\mathcal{M}$ be a finitely generated projective left $\mathcal{A}$-module.
Let $\Omega^{\bullet}$ be a differential graded algebra such that $\Omega^{0}$ is isomorphic to $\mathcal{A}$; thus $\Omega^{\bullet}$ is an $\mathcal{A}$-bimodule.
Using this data, one defines connections (and their curvatures) on $\mathcal{M}$ in the standard fashion (see e.g. \cite{DV}.) More specifically, a {\em connection} on the pair $(\mathcal{M}, \Omega^{\bullet})$ is a degree 1 map $$\nabla: \Omega^{\bullet}\otimes_{\mathcal{A}}\mathcal{M} \rightarrow
\Omega^{\bullet}\otimes_{\mathcal{A}}\mathcal{M}$$ that satisfies the graded Leibniz rule. As in the commutative case, a connection $\nabla$ on $(\mathcal{M}, \Omega^{\bullet})$ induces a connection on
$(\text{End}_{\mathcal{A}}(\mathcal{M}),\Omega^{\bullet})$ and hence a curved dg stucture $D_{\nabla}$ on
$\Omega^{\bullet}\otimes _{\mathcal{A}} \text{End}_{\mathcal{A}}(\mathcal{M})$.

Let $(\mathcal{A},\mathcal{H}, \mathcal{D})$ be an odd spectral triple.  Thus $\mathcal{A}$ is a unital complex algebra with involution faithfully represented on a Hilbert space $\mathcal{H}$ via an involution preserving map $\phi: \mathcal{A} \rightarrow \text{End}(\mathcal{H})$, where $\text{End}(\mathcal{H})$ denotes the algebra of bounded operators on $\mathcal{H}$, and $\mathcal{D}$ is an unbounded densely defined self-adjoint operator on $\mathcal{H}$ with compact resolvent such that
$[\mathcal{D},\phi(a)]$ is bounded for every $a \in \mathcal{A}$. We suppose that $(\mathcal{A},\mathcal{H}, \mathcal{D})$ satisfies the summability and regularity conditions stated in \cite{CM} so that the algebra of pseudodifferential operators $\Psi^{\bullet}(\mathcal{A})$ associated to $(\mathcal{A},\mathcal{H}, \mathcal{D})$ and a residue trace $\tau_{\mathcal{A}}$ on $\Psi^{\bullet}(\mathcal{A})$ exist (cf. \cite[Proposition II.1]{CM}).

We assume without loss of generality that $\mathcal{D}$ is invertible, set $F=\mathcal{D}|\mathcal{D}|^{-1}$ and denote by
$\Omega^{\bullet}(\mathcal{H}, F)$ the dg algebra generated by the Fredholm module $(\mathcal{H}, F)$ (cf. \cite[IV.1.$\alpha$]{Co}). By construction $\Omega^{\bullet}(\mathcal{H}, F)$ is a subalgebra of $\text{End}(\mathcal{H})$ and $\Omega^{0}(\mathcal{H}, F)$ is isomorphic to $\mathcal{A}$. We further suppose that $\Omega^{\bullet}(\mathcal{H}, F)\subset \Psi^{\bullet}(\mathcal{A})$ and that the symmetric bilinear form defined by $\tau_{\mathcal{A}}$ is non-degenerate on $\Omega^{\bullet}(\mathcal{H}, F)$.

Let
$\text{Tr}_{\mathcal{M}}: \text{End}_{\mathcal{A}}(\mathcal{M}) \rightarrow \mathcal{A}$ be the matrix trace and set

$$\tau_{\mathcal{M}}(\omega \otimes m)=\tau_{\mathcal{A}}(\omega \otimes \text{Tr}_{\mathcal{M}}m),
\quad \omega \otimes m \in \Omega^{\bullet}\otimes _{\mathcal{A}} \text{End}_{\mathcal{A}}(\mathcal{M}).$$

One easily checks that the latter defines a trace on $\Omega^{\bullet}\otimes _{\mathcal{A}} \text{End}_{\mathcal{A}}(\mathcal{M})$ and that
$$\langle x, y \rangle_{\mathcal{D}}=\tau_{\mathcal{M}}(xy), \quad x,y \in \Omega^{k}\otimes _{\mathcal{A}} \text{End}_{\mathcal{A}}(\mathcal{M})$$
defines an inner product on $\Omega^{k}\otimes _{\mathcal{A}} \text{End}_{\mathcal{A}}(\mathcal{M})$ and hence a graded inner product on $T(s\Omega^{\bullet}\otimes _{\mathcal{A}} \text{End}_{\mathcal{A}}(\mathcal{M}))$.

Let $G_{\mathcal{M}}$ be a subgroup of $\text{Aut}_{\mathcal{A}}(\mathcal{M})$. The representation of $G_{\mathcal{M}}$ on $\text{End}_{\mathcal{A}}(\mathcal{M})$ given by conjugation extends to a representation on $\Omega^{\bullet}\otimes _{\mathcal{A}} \text{End}_{\mathcal{A}}(\mathcal{M})$ and hence also to a representation $\rho_{\mathcal{M}}$ on $T(s\Omega^{\bullet}\otimes _{\mathcal{A}} \text{End}_{\mathcal{A}}(\mathcal{M}))$. Let $\mathfrak{A}_{\mathcal{M}}$ be the set of all connections on the pair $(\mathcal{M}$,$\Omega^{\bullet}(\mathcal{H}, F))$.

\begin{proposition}The triple  $(\{D_{\nabla}\}_{\nabla \in \mathfrak{A}_{\mathcal{M}}},\langle \cdot ,\cdot \rangle_{\mathcal{M}},\rho_{\mathcal{M}})$ defines an algebraic gauge theory over $G_{\mathcal{M}}$ with target $\Omega^{\bullet}\otimes\text{End}_{\mathcal{A}}(\mathcal{M})$.
\end{proposition}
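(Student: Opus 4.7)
The plan is to mimic the proof of Proposition \ref{cgt} step-by-step, since the geometric data of the classical case has been replaced by its noncommutative analogues in a way that preserves all the relevant formal properties. The setup has been arranged so that $\tau_{\mathcal{M}}$ plays the role of $\int_M \text{Tr}$, the algebra $\Omega^\bullet \otimes_{\mathcal{A}} \text{End}_{\mathcal{A}}(\mathcal{M})$ plays the role of $\Omega_E$, and conjugation by elements of $G_{\mathcal{M}}$ plays the role of the gauge action, so the argument should be essentially parallel.

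First, I would verify that $\langle \cdot , \cdot \rangle_{\mathcal{M}}$ gives a graded inner product on $T(s\Omega^\bullet \otimes_{\mathcal{A}} \text{End}_{\mathcal{A}}(\mathcal{M}))$. Non-degeneracy on each $\Omega^k \otimes_{\mathcal{A}} \text{End}_{\mathcal{A}}(\mathcal{M})$ follows from the assumption that the bilinear form induced by $\tau_{\mathcal{A}}$ is non-degenerate on $\Omega^\bullet(\mathcal{H}, F)$, together with the standard non-degeneracy of the matrix trace $\text{Tr}_{\mathcal{M}}$ on $\text{End}_{\mathcal{A}}(\mathcal{M})$. The extension to the tensor coalgebra is carried out factor-by-factor, as in the classical case, and is graded by construction.

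Second, I would establish unitarity of $\rho_{\mathcal{M}}$ by invoking Lemma \ref{norrep}. The key input is that $\tau_{\mathcal{M}}$ is a trace on $\Omega^\bullet \otimes_{\mathcal{A}} \text{End}_{\mathcal{A}}(\mathcal{M})$, so that (up to Koszul signs) left multiplication $L_g$ and right multiplication $R_g$ by $g \in G_{\mathcal{M}}$ are adjoint with respect to $\langle \cdot, \cdot\rangle_{\mathcal{M}}$. Since $L_{g_1}$ commutes with $R_{g_2}$ for any $g_1, g_2$, the hypotheses of Lemma \ref{norrep} are satisfied with $\alpha(g) = L_g$, and conjugation $\rho(g) = L_g R_{g^{-1}}$ is unitary. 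This lifts to unitarity of $\rho_{\mathcal{M}}$ on $T(s\Omega^\bullet \otimes_{\mathcal{A}} \text{End}_{\mathcal{A}}(\mathcal{M}))$ because the tensor coalgebra inner product is the orthogonal sum of tensor products of the underlying form.

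Third, I would verify that the class $\{D_\nabla\}_{\nabla \in \mathfrak{A}_{\mathcal{M}}}$ is stable under conjugation and that $\rho_{\mathcal{M}}(g)1 = 1$. For the latter, the extension of any linear map to the tensor coalgebra fixes the scalar $1 \in \textbf{k}$ by construction. For the former, given $g \in G_{\mathcal{M}}$, one defines $\nabla^g$ by the standard formula, checks that it is again a connection on $(\mathcal{M}, \Omega^\bullet(\mathcal{H}, F))$, and then verifies componentwise on $m_0^\nabla$, $m_1^\nabla$, $m_2^\nabla$ that $\rho_{\mathcal{M}}(g) D_\nabla \rho_{\mathcal{M}}(g^{-1}) = D_{\nabla^g}$, exactly as in Proposition \ref{cgt}.

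The step I expect to require most care is the combination of the trace property of $\tau_{\mathcal{M}}$ with the Koszul sign rule to confirm the adjointness $L_g^* = R_g$ at the level of the tensor coalgebra, and the compatibility of the induced connection on $\text{End}_{\mathcal{A}}(\mathcal{M})$ with conjugation by $g$. Once those bookkeeping points are in place, the remainder of the argument is a direct transcription of the proof of Proposition \ref{cgt} into the spectral-triple framework.
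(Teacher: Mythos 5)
Your proposal is correct and follows the same route as the paper, which simply states that the proof is completely analogous to that of Proposition \ref{cgt}; your elaboration (adjointness of left and right multiplication via the trace property of $\tau_{\mathcal{M}}$, unitarity via Lemma \ref{norrep}, and stability of $\{D_{\nabla}\}$ under conjugation) is exactly the intended analogy. No gaps.
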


\begin{proof}The proof is completely analogous to that of Proposition \ref{cgt}.
\end{proof}

\subsection{Deformations of dg algebras}
Our third example is based on curved dg structures that are deformations. Let $A$ be a unital dg algebra. Recall that by Example \ref{gacdg} every degree 1 element $\gamma$ of $A$ defines a curved dg structure $D_{\gamma}$ on $A$. Let $G_{A}$ be a subgroup of the group of all invertible degree 0 elements of $A$. Assume that $A$ is equipped with a graded inner product such that action of $G_{A}$ on $A$ given by right multiplication is adjoint to the action given by left multiplication. This inner product induces an inner product $\langle \cdot,\cdot \rangle$ on $T(sA)$. The action of $G_{A}$ on $A$ by conjugation induces a representation $\rho$ of $G_{A}$ on $T(sA)$ as already observed in Example \ref{gacdg}.

\begin{proposition}\label{dgt}The triple  $(\{D_{\gamma}\}_{\gamma \in A_{1}},\langle \cdot ,\cdot \rangle,\rho)$ defines an algebraic gauge theory over $G_{A}$ with target $A$.
\end{proposition}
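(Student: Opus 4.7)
The plan is to verify the three conditions of Definition \ref{agt} in turn, with all of the substantive ingredients already present in Example \ref{gacdg} and Lemma \ref{norrep}; the proof is essentially an assembly.

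For condition (1), I would simply invoke Example \ref{gacdg}: for any $\gamma \in A_{1}$, the triple $(m_{0}^{\gamma}, m_{1}^{\gamma}, m_{2}^{\gamma})$ satisfies identities \eqref{cdg1}--\eqref{cd4} by a direct check using $d^{2}=0$, the graded Leibniz rule and the associativity of the product in $A$. For condition (2), the graded inner product on $T(sA)$ is the natural one: on each homogeneous component $(sA)^{\otimes n}$ take the $n$-fold tensor product of the given inner product on $sA$, and declare different tensor powers orthogonal. Since the input pairing on $A$ is graded, so is the induced pairing on $T(sA)$.

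The bulk of the work is condition (3). The representation $\rho$ is the one produced in Example \ref{gacdg} via the coalgebra morphisms $F^{g}$ with $F^{g}_{1}=c(g)$ and $F^{g}_{k}=0$ for $k>1$; the group-homomorphism property was already checked there using \eqref{compmor}, and each $F^{g}$ is a coalgebra map by construction. To obtain unitarity of $\rho$, I would first show that $c(g)$ is unitary on $A$ by applying Lemma \ref{norrep} with $\alpha$ equal to left multiplication: the hypothesis that right multiplication is adjoint to left multiplication gives $\alpha^{*}(h)=R_{h}$, the required commutator condition $[L_{g_{1}}, R_{g_{2}}]=0$ is associativity, and then $c(g)=L_{g}R_{g}^{-1}=\alpha(g)\alpha^{*}(g^{-1})$ is unitary. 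Unitarity of $F^{g}$ on each $(sA)^{\otimes n}$ follows because $F^{g}$ acts there as $c(g)^{\otimes n}$, and the orthogonality of the components extends this to all of $T(sA)$. For the covariance property, the chain-map identity $F^{g}D_{\gamma}=D_{\gamma'}F^{g}$ supplied by Example \ref{gacdg}, together with the invertibility of $F^{g}$ (its inverse is $F^{g^{-1}}$), gives $\rho(g)D_{\gamma}\rho(g^{-1})=D_{\gamma'}$ with $\gamma'=g\gamma g^{-1}+g\,dg^{-1}\in A_{1}$, so that $\{D_{\gamma}\}_{\gamma\in A_{1}}$ is preserved. Finally, $\rho(g)1=1$ because the morphism $F^{g}$ has no component $F^{g}_{0}:\textbf{k}\to A$, so on the scalar summand of $T(sA)$ it acts as the identity.

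There is no real obstacle here; the only point that needs a moment of care is translating the abstract adjointness hypothesis into the concrete assertion that $c(g)$ is an isometry via Lemma \ref{norrep}, and the only piece of genuine computation — that conjugation carries the curved dg structure $D_{\gamma}$ to $D_{\gamma'}$ with $\gamma'=g\gamma g^{-1}+g\,dg^{-1}$ — has already been performed in Example \ref{gacdg}.
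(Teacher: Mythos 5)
Your proposal is correct and follows essentially the same route as the paper: the paper's proof defers to that of Proposition \ref{cgt}, which likewise obtains unitarity of $\rho$ from Lemma \ref{norrep} applied to left multiplication (with right multiplication as its adjoint), and cites Example \ref{gacdg} for the covariance identity $\rho(g)D_{\gamma}\rho(g^{-1})=D_{\gamma'}$. You merely spell out the routine extension of unitarity to $T(sA)$ componentwise and the condition $\rho(g)1=1$, which the paper leaves implicit.
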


\begin{proof} The proof is analogous to that of Proposition \ref{cgt}. The fact that
$\rho(g)D_{\gamma}=D_{\gamma'}\rho(g)$ for every $g \in G_{A}$ and $\gamma'=g\gamma g^{-1}+gdg^{-1}$ was already observed in Example \ref{gacdg}.
\end{proof}

\section{Special chain contractions and the perturbation lemma}
\subsection{Special chain contractions} In this subsection we discuss certain properties of chain maps transferred via special chain contractions.
\begin{definition}\label{contr} Assume that we are given two chain complexes $(C,d)$ and $(C',d')$ and chain maps $p: C \rightarrow C'$ and
$i: C' \rightarrow C$. We say that the pair $(p,i)$ is a {\em chain contraction} of $(C,d)$ to $(C',d')$ if $pi=1_{C'}$ and there exists a homotopy between $ip$ and the identity map on $C$, i.e. a map
$H:C^{\bullet}\rightarrow C^{\bullet -1}$ such that
\begin{equation}\label{homeq}
ip-1_{C}=dH+Hd.
\end{equation}
We say that  the contraction $(p,i,H)$ is  {\em special} if the following so called annihilation conditions hold.
\begin{equation}\label{ancon}
Hi=0, \quad pH=0, \quad H^{2}=0.
\end{equation}
\end{definition}
It turns out that every chain contraction may be modified to a special one \cite{LS}. In the next three lemmas we assume that we are given
special chain contractions $(p_{k},i_{k},H_{k})$  of $(C_{k},d_{k})$ to $(C'_{k},d'_{k})$, $k=1,2,3$. Given a chain map
$\phi: C_{k} \rightarrow C_{l}$, we set $$\widehat{\phi}= p_{l}\phi \,i_{k}.$$

\begin{lemma}\label{complemma} Let $\phi: C_{1} \rightarrow C_{2}$ and $\psi: C_{2} \rightarrow C_{3}$ be chain maps such that
$\phi H_{1}=H_{2} \phi$ and  $\psi H_{2}= H_{3} \psi$. Then
$$\widehat{\psi \phi}= \widehat{\psi} \widehat{\phi}.$$
\end{lemma}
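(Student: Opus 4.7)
The plan is to expand $\widehat{\psi}\,\widehat{\phi} = p_3\psi i_2 p_2 \phi i_1$ using the homotopy identity $i_2 p_2 = 1_{C_2} + d_2 H_2 + H_2 d_2$ from \eqref{homeq}, which gives
\[
\widehat{\psi}\,\widehat{\phi} \;=\; p_3\psi\phi i_1 \;+\; p_3\psi\, d_2 H_2\, \phi i_1 \;+\; p_3\psi\, H_2 d_2\, \phi i_1.
\]
The first summand is exactly $\widehat{\psi\phi}$, so the task reduces to showing that each of the two remaining summands vanishes.

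For the middle summand I would commute $d_2$ past $\psi$ using the chain map property, or equivalently push $H_2$ past $\phi$: since $H_2 \phi = \phi H_1$ by hypothesis, the factor $H_2 \phi i_1$ collapses to $\phi H_1 i_1 = 0$ by the annihilation condition $H_1 i_1 = 0$ from \eqref{ancon}. For the last summand the symmetric argument works on the other side: using $\psi H_2 = H_3 \psi$, the prefix $p_3 \psi H_2$ rewrites as $p_3 H_3 \psi$, which is zero by the annihilation condition $p_3 H_3 = 0$. Both correction terms thus drop out and the identity $\widehat{\psi\phi} = \widehat{\psi}\,\widehat{\phi}$ follows.

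I do not anticipate a real obstacle here — the proof is a direct manipulation. The only thing to be careful about is the placement of the commutations: one correction term is killed from the right (using $H_1 i_1 = 0$ after sliding $H_2$ through $\phi$), and the other from the left (using $p_3 H_3 = 0$ after sliding $H_2$ through $\psi$). The third annihilation condition $H^2 = 0$ is not needed for this particular lemma; it presumably becomes relevant in the subsequent lemmas concerning the perturbation lemma. Similarly, the chain map property of $\phi$ and $\psi$ is only needed implicitly through the commutations with $H_k$, not directly.
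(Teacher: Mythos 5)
Your proof is correct and follows essentially the same route as the paper's: expand $i_2p_2 = 1 + d_2H_2 + H_2d_2$, kill the $d_2H_2$ term by sliding $H_2$ through $\phi$ onto $i_1$ (using $H_1i_1=0$), and kill the $H_2d_2$ term by sliding $H_2$ through $\psi$ onto $p_3$ (using $p_3H_3=0$). Your side remarks — that $H^2=0$ is not needed here and that the chain map property enters only through the commutation hypotheses — are also accurate.
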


\begin{proof}We compute using (\ref{homeq}) and (\ref{ancon}) as follows:
$$\widehat{\psi} \widehat{\phi}=p_{3}\psi \, i_{2} p_{2}\phi\,  i_{1} =
p_{3}\psi(1+d_{2}H_{2}+H_{2}d_{2})\phi \, i_{1}=$$
$$=p_{3}\psi \phi \, i_{1}+p_{3}\psi d_{2}\phi H_{1} i_{1}+p_{3}H_{3}\psi d_{2}\phi \, i_{1}=\widehat{\psi \phi}.$$
\end{proof}

\begin{lemma}\label{adjlem} Let $\langle\cdot , \cdot \rangle _{k}$ be an inner product on $C_{k}$, $k=1,2$. Let $\phi: C_{1} \rightarrow C_{2}$ and $\psi: C_{2} \rightarrow C_{1}$ be chain maps satisfying $\phi H_{1}=H_{2} \phi$ and  $\psi H_{2}= H_{1} \psi$ and such that $\psi$ is the adjoint of $\phi$. Then $\widehat{\psi}$ is the adjoint of $\widehat{\phi}$ with respect to the inner products $\langle i_{k}\cdot , i_{k} \cdot \rangle _{k}$ on $C'_{k}$, $k=1,2$.
\end{lemma}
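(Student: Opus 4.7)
The plan is to unfold the claim into a direct computation using the homotopy identity and annihilation conditions. Writing $\langle \cdot, \cdot\rangle'_k := \langle i_k \cdot, i_k \cdot\rangle_k$ for the induced inner product on $C'_k$, the goal is to verify, for $x' \in C'_1$ and $y' \in C'_2$,
$$\langle i_2\,p_2\phi\, i_1\, x', i_2\, y'\rangle_2 \; = \; \langle i_1 x', i_1\, p_1\psi\, i_2\, y'\rangle_1.$$
I would attack both sides symmetrically by using the homotopy identity $i_k p_k = 1_{C_k} + d_k H_k + H_k d_k$ to remove the outer $i_k p_k$ appearing on each side, then show that every resulting term involving $H_k$ is forced to vanish.

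On the left-hand side, expanding $i_2 p_2$ splits the expression into three summands. The term $\langle d_2 H_2 \phi\, i_1 x', i_2 y'\rangle_2$ equals $\langle d_2 \phi H_1 i_1 x', i_2 y'\rangle_2$ using the commutation $\phi H_1 = H_2\phi$, and this is zero by $H_1 i_1 = 0$. The term $\langle H_2 d_2 \phi\, i_1 x', i_2 y'\rangle_2$ becomes $\langle H_2 \phi\, d_1 i_1 x', i_2 y'\rangle_2$ since $\phi$ is a chain map; rewriting $d_1 i_1 = i_1 d'_1$ and using $H_2\phi = \phi H_1$ again, this reduces to $\langle \phi H_1 i_1 d'_1 x', i_2 y'\rangle_2 = 0$. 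Thus the left-hand side collapses to $\langle \phi\, i_1 x', i_2 y'\rangle_2$, which by the hypothesis $\psi = \phi^*$ equals $\langle i_1 x', \psi\, i_2 y'\rangle_1$.

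An analogous computation on the right-hand side expands $i_1 p_1$ and shows the two $H_1$-terms vanish: one uses $\psi H_2 = H_1 \psi$ together with $H_2 i_2 = 0$, and the other uses additionally that $\psi$ and $i_2$ are chain maps to push $H_1$ through to $H_2 i_2$. Hence the right-hand side collapses to the same quantity $\langle i_1 x', \psi\, i_2 y'\rangle_1$, and the two sides agree.

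There is no real obstacle here; the only mild subtlety is keeping track of which commutation relation and which annihilation identity is used on each of the four stray terms. In particular, only $H_1 i_1 = 0$ and $H_2 i_2 = 0$ are needed from the special-contraction data; neither $p_k H_k = 0$ nor $H_k^2 = 0$ plays a role, nor is any adjointness property of $d_k$ or $H_k$ required.
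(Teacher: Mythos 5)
Your proposal is correct and follows essentially the same route as the paper's proof: expand $i_kp_k$ via the homotopy identity on each side, kill the two stray terms using the commutation relations $\phi H_1=H_2\phi$, $\psi H_2=H_1\psi$ together with $H_ki_k=0$ and the chain-map property of $i_k$, and conclude by the adjointness of $\phi$ and $\psi$. Your closing observation about which annihilation conditions are actually needed is accurate.
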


\begin{proof}
For every $c_{k} \in C'_{k},k=1,2$ one has, using (\ref{homeq}) and (\ref{ancon})
$$\langle i_{2}\widehat{\phi}c_{1} , i_{2} c_{2} \rangle _{2}=\langle i_{2}p_{2}\phi \, i_{1}c_{1} , i_{2} c_{2} \rangle _{2}=
\langle (1+d_{2}H_{2}+H_{2}d_{2})\phi \, i_{1}c_{1} , i_{2} c_{2} \rangle _{2}=$$
$$=\langle \phi \, i_{1}c_{1} , i_{2} c_{2} \rangle _{2}+\langle d_{2}\phi H_{1}i_{1}c_{1}, i_{2} c_{2}\rangle _{2}+
\langle \phi H_{1}i_{1}d'_{1}c_{1}, i_{2} c_{2}\rangle _{2}
=\langle \phi \, i_{1}c_{1} , i_{2} c_{2} \rangle _{2}.$$
On the other hand, one has
$$\langle i_{1}c_{1} , i_{1}\widehat{\psi} c_{2} \rangle _{1}=\langle  i_{1} c_{1}, i_{1}p_{1}\psi  \, i_{2}c_{2}\rangle _{1}=
\langle i_{1}c_{1} , (1+d_{1}H_{1}+H_{1}d_{1})\psi \, i_{2}c_{2} \rangle _{1}=$$
$$=\langle  i_{1}c_{1} , \psi \, i_{2} c_{2} \rangle _{1}+\langle i_{1}c_{1}, d_{1}\psi H_{2} i_{2} c_{2}\rangle _{1}+
\langle i_{1}c_{1}, \psi H_{2}i_{2}d'_{2} c_{2}\rangle _{1}
=\langle i_{1}c_{1} , \psi \, i_{2} c_{2} \rangle _{1}.$$
\end{proof}

\begin{lemma}\label{unilem}  Let $\langle\cdot , \cdot \rangle _{k}$ be an inner product on $C_{k}$, $k=1,2$ and let $\phi: C_{1} \rightarrow C_{2}$ be a unitary chain map satisfying $\phi H_{1}=H_{2} \phi$. Then $\widehat{\phi}$ is unitary with respect to the inner products  $\langle i_{k}\cdot , i_{k} \cdot \rangle _{k}$ on $C'_{k}$, $k=1,2$.
\end{lemma}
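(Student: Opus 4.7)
The plan is to exploit unitarity of $\phi$ to reduce everything to the two preceding lemmas. Write $\psi=\phi^{-1}=\phi^{*}$; then $\psi$ is itself a chain map (the inverse of a chain map is a chain map). The first thing I need to check is the commutation of $\psi$ with the contracting homotopies, namely $\psi H_{2}=H_{1}\psi$. This follows immediately by conjugating $\phi H_{1}=H_{2}\phi$ by $\psi$ on both sides, i.e.\ multiplying on the left and on the right by $\psi$.

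With this in hand, I can apply Lemma \ref{adjlem} to the pair $(\phi,\psi)$: since $\psi=\phi^{*}$, the lemma gives that $\widehat{\psi}$ is the adjoint of $\widehat{\phi}$ with respect to the inner products $\langle i_{k}\cdot,i_{k}\cdot\rangle_{k}$. So it remains only to show that $\widehat{\psi}$ is a two-sided inverse of $\widehat{\phi}$.

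For this I apply Lemma \ref{complemma} with $C_{3}=C_{1}$ (and the same contraction data) to the composition $\psi\phi:C_{1}\to C_{1}$, and symmetrically to $\phi\psi$. Both uses of the lemma are legitimate because we have verified the required commutations of $\phi$ and $\psi$ with the homotopies. This yields
\begin{equation*}
\widehat{\psi}\,\widehat{\phi}=\widehat{\psi\phi}=\widehat{1_{C_{1}}}=p_{1}i_{1}=1_{C'_{1}},\qquad
\widehat{\phi}\,\widehat{\psi}=\widehat{\phi\psi}=\widehat{1_{C_{2}}}=p_{2}i_{2}=1_{C'_{2}},
\end{equation*}
using $p_{k}i_{k}=1_{C'_{k}}$ from the definition of a chain contraction. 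Combining this with the adjointness from Lemma \ref{adjlem} gives $\widehat{\phi}^{\,*}=\widehat{\psi}=\widehat{\phi}^{\,-1}$, which is precisely the statement that $\widehat{\phi}$ is unitary.

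I do not anticipate a real obstacle: the only small verification is the commutation $\psi H_{2}=H_{1}\psi$, and once that is in place the result is just the functoriality provided by Lemma \ref{complemma} combined with the adjoint statement of Lemma \ref{adjlem}. Note that nothing additional needs to be said about the degrees of $\phi$ and $\psi$ since they are chain maps of degree $0$ and the inner product is graded.
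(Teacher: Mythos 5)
Your proof is correct and is exactly the argument the paper intends: the paper's own proof is the one-line remark that the lemma ``follows directly from Lemmas \ref{complemma} and \ref{adjlem}'', and your write-up simply makes explicit the steps (setting $\psi=\phi^{*}=\phi^{-1}$, checking $\psi H_{2}=H_{1}\psi$, and combining the adjointness from Lemma \ref{adjlem} with the multiplicativity and $\widehat{1}=p_{k}i_{k}=1_{C'_{k}}$ from Lemma \ref{complemma}). No differences worth noting.
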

\begin{proof} This follows directly from Lemmas \ref{complemma} and \ref{adjlem}.
\end{proof}

\subsection{The perturbation lemma} We recall the coalgebra homological perturbation lemma from \cite{HK} as stated in \cite{NZ}. Given a chain complex $(C,d)$, we say that a map $\delta: C^{\bullet}\rightarrow C^{\bullet +1}$ is a {\em perturbation} of $d$ if $(d+\delta)^{2}=0$.

\begin{theorem}(cf. \cite[Section 2]{HK})\label{pertlemma}
$(a)$ Let $(C_{1},d_{1})$ be a chain complex and $\delta_{1}$ be a perturbation of $d_{1}$.
Suppose that we are given a special chain contraction $(p,i,H)$ between $(C_{1},d_{1})$
and a chain complex $(C_{2},d_{2})$.
Assume further that $\textbf{1}_{C_{1}}-\delta_{1} H$ is invertible and set
$\Sigma = (\textbf{1}_{C_{1}}-\delta_{1} H)^{-1}\delta_{1}$.
Then $\delta_{2}=p\Sigma i$ is a perturbation of $d_{2}$ and the formulas
$$
\widetilde{i} \, = \, i+H\Sigma i, \quad \widetilde{p}=p+p\Sigma H, \quad \widetilde{H}=H+H\Sigma H
$$
define a special chain contraction between $(C_{1},d_{1}+\delta_{1})$ and $(C_{2}, d_{2}+\delta_{2})$.

$(b)$ Assume, in addition to the hypotheses in part $(a)$, that $(C_{1},d_{1})$
is dg coalgebra,
$\delta_{1}$ is a coderivation, and $p$ and $i$ are coalgebra homorphisms.
Then $\delta_{2}$ is a coderivation and the maps $\widetilde{p}$ and $\widetilde{i}$ are
coalgebra homomorphisms.
\end{theorem}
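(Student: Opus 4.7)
The plan is to prove part (a) by direct computation based on the geometric series identity
\[
\Sigma \;=\; \delta_1 + \delta_1 H \Sigma \;=\; \delta_1 + \Sigma H \delta_1,
\]
which follows immediately from $\Sigma = (\mathbf{1}_{C_1} - \delta_1 H)^{-1} \delta_1$, and to prove part (b) by analyzing the formal expansion $\Sigma = \sum_{n \geq 0} (\delta_1 H)^n \delta_1$ order by order, exploiting the coalgebra compatibility of $p$ and $i$ together with the special annihilation conditions.

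For part (a), I would first combine the perturbation hypothesis $d_1 \delta_1 + \delta_1 d_1 + \delta_1^2 = 0$ with the homotopy identity $ip - \mathbf{1}_{C_1} = d_1 H + H d_1$ and the displayed identity above to derive a ``twisted commutation'' relation expressing $d_1 \Sigma + \Sigma d_1$ in terms of $\Sigma^2$ and $\Sigma(ip - \mathbf{1}_{C_1})$. Using $p d_1 = d_2 p$, $d_1 i = i d_2$, and $pi = \mathbf{1}_{C_2}$, this yields both $(d_2 + \delta_2)^2 = 0$ and the intertwining relations $\tilde{p}(d_1 + \delta_1) = (d_2 + \delta_2)\tilde{p}$ and $(d_1 + \delta_1)\tilde{i} = \tilde{i}(d_2 + \delta_2)$. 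The identity $\tilde{p}\tilde{i} = \mathbf{1}_{C_2}$ is immediate: expanding
\[
(p + p\Sigma H)(i + H\Sigma i) \;=\; pi + pH\Sigma i + p\Sigma H i + p\Sigma H^2 \Sigma i,
\]
the three cross terms all vanish by $pH = 0$, $Hi = 0$, $H^2 = 0$ from (\ref{ancon}). The perturbed homotopy identity for $\tilde{H}$ and the three annihilation conditions $\tilde{p}\tilde{H} = 0$, $\tilde{H}\tilde{i} = 0$, $\tilde{H}^2 = 0$ follow in the same spirit: every extraneous term contains a factor $pH$, $Hi$, or $H^2$ and is killed, while the surviving terms reassemble via the twisted commutation relation into the original unperturbed identities.

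For part (b), I would work termwise on $\delta_2 = \sum_{n \geq 0} p \delta_1 (H\delta_1)^n i$ and show by induction on $n$ that each summand is a coderivation on $T(\ss V_2)$. Since $p$ and $i$ are coalgebra maps and $\delta_1$ satisfies the coderivation Leibniz rule, applying the coproduct and distributing via Leibniz produces a sum of tensor terms in which each $H$ sits between two $\delta_1$'s or between $\delta_1$ and $i$ (resp.\ $p$); the relation $pi = \mathbf{1}_{C_2}$ and the annihilation conditions force the non-coalgebra contributions to cancel pairwise or be absorbed into the coderivation format $(\delta_2 \otimes \mathbf{1} + \mathbf{1} \otimes \delta_2)\Delta$. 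A parallel induction shows that $\tilde{p}$ and $\tilde{i}$ remain coalgebra homomorphisms, since each correction term is built from the coderivation $\delta_1$ applied to a coalgebra map.

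The main obstacle will be part (b). Because $H$ is not itself a coalgebra map, naive diagram chasing does not close, and one must organize the induction carefully along the filtration of $T(\ss V)$ by tensor length, so that at each finite order only finitely many terms contribute, and exploit all three annihilation conditions simultaneously together with the coalgebra property of $p$ and $i$ to eliminate the non-coalgebraic pieces produced by $H$. Part (a), by contrast, is a lengthy but essentially mechanical verification driven by the geometric-series identity and the annihilation conditions.
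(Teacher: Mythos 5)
A preliminary remark: the paper offers no proof of Theorem \ref{pertlemma} at all --- it is quoted from Huebschmann--Kadeishvili \cite{HK} (as restated in \cite{NZ}) and used as a black box --- so there is no in-paper argument to measure yours against; I can only assess your proposal on its own terms. Your part (a) is essentially right and is the standard mechanical verification: from $\Sigma=\delta_1+\delta_1H\Sigma=\delta_1+\Sigma H\delta_1$ and $\delta_1 d_1+d_1\delta_1+\delta_1^2=0$ one derives the identity $d_1\Sigma+\Sigma d_1=-\Sigma\,i\,p\,\Sigma$ (your ``twisted commutation relation''), which together with $pd_1=d_2p$, $d_1i=id_2$ and $pi=\mathbf{1}_{C_2}$ yields $(d_2+\delta_2)^2=0$, the intertwining of $\widetilde p$ and $\widetilde i$, and the perturbed homotopy identity; the side conditions for $\widetilde p$, $\widetilde i$, $\widetilde H$ reduce term by term to (\ref{ancon}) exactly as you say.

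Part (b), however, contains a genuine gap. You propose to show ``by induction on $n$ that each summand $p\delta_1(H\delta_1)^n i$ is a coderivation.'' This is false for $n\geq 1$. The $n=0$ term does work, since $\Delta\, p\delta_1 i=(p\delta_1 i\otimes pi+pi\otimes p\delta_1 i)\Delta=(p\delta_1 i\otimes\mathbf{1}+\mathbf{1}\otimes p\delta_1 i)\Delta$ because $pi=\mathbf{1}_{C_2}$; but already for $n=1$ pushing $\Delta$ through $p\delta_1 H\delta_1 i$ requires computing $\Delta\circ H$, and the hypotheses give $H$ no compatibility with the coproduct whatsoever: the annihilation conditions $Hi=0$, $pH=0$, $H^2=0$ are statements about compositions, not about $\Delta$, so ``exploiting all three annihilation conditions simultaneously'' cannot eliminate the non-coalgebraic pieces termwise. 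Only the total sum $\delta_2$ is a coderivation, and the necessary cancellations occur across different powers of $H\delta_1$, not within each one. The argument that actually closes (and is the content of \cite[Section 2]{HK}) is of a different kind: one writes the fixed-point equations $\widetilde i=i+H\delta_1\widetilde i$, $\widetilde p=p+\widetilde p\,\delta_1 H$, $\delta_2=p\delta_1\widetilde i$, and shows that the coalgebra defect $\Delta\widetilde i-(\widetilde i\otimes\widetilde i)\Delta$ (respectively the coderivation defect of $\delta_2$) satisfies a recursion that strictly raises the coaugmentation filtration degree and hence vanishes by cocompleteness; alternatively, in the application made in this paper the homotopy on $T(\ss V)$ is produced by the tensor trick from a homotopy on $V$ and satisfies an explicit coproduct identity that makes a direct check possible. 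One of these ideas is the missing ingredient; as written, your induction does not close, which your final paragraph half-concedes without supplying the remedy.
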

 We shall recall how this theorem may be used to transfer curved $A_{\infty}$-structures and morphisms in the next section.

\begin{lemma}\label{philemma} Assume that we are given two copies of the data described in Theorem \ref{pertlemma}, i.e.
a special chain contraction $(p,i,H)$ of  $(C_{1},d_{1})$ to $(C_{2},d_{2})$, a special chain contraction
$(p',i',H')$ of $(C'_{1},d'_{1})$ to $(C_{2},d'_{2})$, a perturbation $\delta_{1}$ of $d_{1}$
and a perturbation $\delta'_{1}$ of $d'_{1}$.
Let $\varphi: (C_{1},d_{1}+\delta_{1})  \rightarrow  (C'_{1},d'_{1}+\delta'_{1})$ be a chain map such that

\begin{equation}\label{Hphi}\varphi H=H' \varphi.
\end{equation}

Then $\varphi \widetilde{H}=\widetilde{H'}\varphi$.
\end{lemma}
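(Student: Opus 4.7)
The plan is to exploit the recursion $\widetilde{H} = H + H\delta_1 \widetilde{H}$, which follows from $\widetilde{H} = H + H\Sigma H$ together with $\Sigma = \delta_1 + \delta_1 H \Sigma$ (the latter being equivalent to $(1-\delta_1 H)\Sigma = \delta_1$), and analogously $\widetilde{H'} = H' + H'\delta'_1 \widetilde{H'}$. Setting $K := \varphi \widetilde{H} - \widetilde{H'}\varphi$ and substituting these two recursions together with the hypothesis $\varphi H = H'\varphi$ yields, after adding and subtracting $H'\delta'_1 \varphi \widetilde{H}$, the relation $K = H'(\varphi\delta_1 - \delta'_1 \varphi)\widetilde{H} + H'\delta'_1 K$, i.e.
$$
(1 - H'\delta'_1)\,K \, = \, H'(\varphi\delta_1 - \delta'_1 \varphi)\widetilde{H}.
$$
The operator $1 - H'\delta'_1$ is invertible with inverse $1 + H'\Sigma'$, as one checks directly from the companion identity $\Sigma' = \delta'_1 + \Sigma' H'\delta'_1$. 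Using the chain-map condition $\varphi(d_1+\delta_1) = (d'_1 + \delta'_1)\varphi$ to replace $\varphi\delta_1 - \delta'_1 \varphi$ by $d'_1 \varphi - \varphi d_1$, the entire proof reduces to showing
$$
H'(d'_1 \varphi - \varphi d_1)\widetilde{H} \, = \, 0.
$$

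To handle this I would first establish two auxiliary identities. Multiplying the homotopy equation $d'_1 H' + H'd'_1 = i'p' - 1$ on the left by $H'$ and invoking the annihilation conditions $H'i' = 0$ and $(H')^2 = 0$ collapses it to $H'd'_1 H' = -H'$. Similarly, rewriting $d_1 H = ip - 1 - H d_1$ and using the immediate consequences $H'\varphi i = \varphi H i = 0$ and $H'\varphi H = \varphi H^2 = 0$ of $\varphi H = H'\varphi$ and the annihilation conditions gives $H'\varphi d_1 H = -H'\varphi$. These are the two asymmetric ``$H d H = -H$''-type identities that drive the argument.

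With these in hand one expands both sides of the difference. Substituting $\varphi \widetilde{H} = H'\varphi + H'\varphi \delta_1 \widetilde{H}$ (obtained from the recursion for $\widetilde{H}$ and $\varphi H = H'\varphi$) into $H'd'_1 \varphi \widetilde{H}$ and applying $H'd'_1 H' = -H'$ yields $-H'\varphi - H'\varphi\delta_1\widetilde{H}$. Expanding $d_1 \widetilde{H} = d_1 H + d_1 H\delta_1 \widetilde{H}$ in $H'\varphi d_1 \widetilde{H}$ and applying $H'\varphi d_1 H = -H'\varphi$ yields the same expression. The two cancel, forcing $K = 0$. The substantive step is the correct bookkeeping that identifies the recursive form of $\widetilde{H}$, turning the desired equality into a fixed-point equation for $K$; everything else reduces to the two short identities above, neither of which involves any genuine obstacle.
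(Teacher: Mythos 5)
Your proof is correct, but it follows a genuinely different route from the paper's. The paper expands $\Sigma$ as a geometric series, observes that $H\Sigma H=\sum_{n}(H\delta_{1})^{n}H$, and reduces the claim term by term to the single identity $\varphi H\delta_{1}H=H'\delta_{1}'H'\varphi$, which via the chain-map property becomes $H'(\varphi d_{1}-d_{1}'\varphi)H=0$ and is then checked from \eqref{homeq}, \eqref{ancon} and \eqref{Hphi}. You instead work with the recursion $\widetilde{H}=H+H\delta_{1}\widetilde{H}$, derive the fixed-point equation $(1-H'\delta_{1}')K=H'(\varphi\delta_{1}-\delta_{1}'\varphi)\widetilde{H}$ for the difference $K$, invert $1-H'\delta_{1}'$ (legitimately, since invertibility of $1-\delta_{1}'H'$ is part of the hypotheses and $1+H'\Sigma'$ is indeed a two-sided inverse), and reduce to $H'(d_{1}'\varphi-\varphi d_{1})\widetilde{H}=0$, which you verify via the two identities $H'd_{1}'H'=-H'$ and $H'\varphi d_{1}H=-H'\varphi$ — the same core computation as the paper's final display, just packaged asymmetrically. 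What your approach buys is that it is purely algebraic: it never writes $\Sigma$ as an infinite series, so it needs only the invertibility of $1-\delta_{1}H$ assumed in Theorem \ref{pertlemma} rather than convergence of the geometric series (a point the paper's proof quietly relies on, though it is harmless in all the applications considered). The cost is a slightly longer reduction, since your target identity involves $\widetilde{H}$ rather than $H$ and therefore requires one more unwinding of the recursion. Both arguments ultimately rest on the same three ingredients: the homotopy equation, the annihilation conditions, and the intertwining hypothesis \eqref{Hphi}.
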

\begin{proof} We need to show that

\begin{equation}\label{Hsigma}\varphi H\Sigma H=H'\Sigma' H' \varphi.
\end{equation}

 We can write
$$\Sigma=1+\sum_{n=1}^{\infty}(\delta_{1} H)^{n}\delta_{1}$$
which using $H^{2}=0$ implies
\begin{equation}\label{HsigmaH}
H\Sigma H=\sum_{n=2}^{\infty}(H \delta_{1} )^{n}H
\end{equation}
Now it follows from (\ref{HsigmaH}) and (\ref{Hphi}) that in order to prove (\ref{Hsigma}) it suffices to show that
\begin{equation}\label{HdeltaH}
\varphi H \delta_{1} H = H' \delta'_{1} H' \varphi.
\end{equation}
Using that $\varphi (d_{1}+\delta_{1})=(d'_{1}+\delta'_{1})\varphi$ we see that it suffices to prove that
\begin{equation}\label{dphi}
H'(\varphi d_{1} - d'_{1} \varphi)H=0
\end{equation}
Finally, we verify (\ref{dphi}) using (\ref{ancon}) and (\ref{Hphi}):
$$ H'(\varphi d_{1} - d'_{1} \varphi)H =
H'\varphi (ip -1 - H d_{1})- (i'p' -1 - H' d'_{1})\varphi H=$$
$$=H'\varphi ip - H'\varphi - H'\varphi H d_{1}-i'p'\varphi H + \varphi H + d'_{1}H'\varphi H=0.$$

\end{proof}

\section{Algebraic gauge theory over a groupoid}
\subsection{Definition and main properties}
In this section we generalize the definition of algebraic gauge theory from Section 3 by replacing the curved dg structures with curved $A_{\infty}$-structures and the group by a groupoid. We denote the set of all curved curved $A_{\infty}$-structures on $V$ by $\mathfrak{B}_{V}$.
\begin{definition}
Let $\mathbf{J}$ be the category whose objects are all triples $(V, \langle \cdot,\cdot \rangle_{T(sV)}
, \mathfrak{C}_{V})$,
where $V$ is a graded vector space, $\langle \cdot,\cdot \rangle_{T(sV)}$ is a graded inner product on $T(sV)$ and $\mathfrak{C}_{V}$ is a subset of $\mathfrak{B}_{V}$, and whose morphisms are pairs $(\phi,\psi)$ of  unit-preserving partially isometric coalgebra morphisms
$$ \phi: T(sV_{1}) \rightarrow  T(sV_{2}) ,\quad \psi: T(sV_{2}) \rightarrow  T(sV_{1}) $$
such that
$$ \phi\,\mathfrak{C}_{V_{1}}\psi \subset  \mathfrak{C}_{V_{2}}$$
with the obvious composition law.

\end{definition}

\begin{definition}\label{agtg} Let $\mathbf{G}$ be a groupoid. An {\em algebraic gauge theory} over $\mathbf{G}$ is a functor
$\mathcal{F}: \mathbf{G} \rightarrow \mathbf{J}$.
\end{definition}

If $\mathbf{G}$ is a groupoid, we denote the set of units of $\mathbf{G}$ by $\mathbf{G}^{(0)}$ and the source and target maps by $s,t:\mathbf{G}\rightarrow \mathbf{G}^{(0)}$. We denote the image of $g \in \mathbf{G}^{(0)} $  under $\mathcal{F}$  by
$(V_{g}, \langle \cdot,\cdot \rangle_{g},\mathfrak{C}_{g})$. We call the morphisms in the image of $\mathcal{F}$ {\em gauge transformations} and the collection of functionals $\{S_{g}\}_{g \in  \mathbf{G}^{(0)}}$ given by
$$S_{g}(D)=\langle D1, D1 \rangle_{g}, \quad D \in \{\mathfrak{C}_{g}\}_{g \in  \mathbf{G}^{(0)}}$$
the {\em action functional} of the theory. As in Section 3, one shows that the action functional is invariant under gauge transformations, i.e. for every $g \in \mathbf{G}$ and every $D \in \mathfrak{C}_{s(g)}$ one has

$$S_{t(g)}(\mathcal{F}(g)D\mathcal{F}^{-1}(g))=S_{s(g)}(D).$$

In this paper we discuss only examples of algebraic gauge theory over $\mathbf{G}$ in which all spaces $\{V_{g}\}_{g \in  \mathbf{G}^{(0)}}$ coincide with a fixed space $V$. In this special case the representation of the groupoid $\mathbf{G}$ given by $\mathcal{F}$ defines an equivalence relation on the collection of curved $A_{\infty}$-structures on $V$.

\subsection{Example: Transferred curved $A_{\infty}$-structures}
Let $A$ be a unital real or complex dg algebra and $G$ be a subgroup of the group of all invertible degree 0 elements of $A$. We consider a triple $(\{D_{\gamma}\}_{\nabla \in A_{1}},\langle \cdot ,\cdot \rangle,\rho)$ as in Proposition \ref{dgt}.

Let $(p,i,H)$ be a special contraction from $A$ to a chain complex $B$ and assume that the conditions stated in
\cite[Proposition 3.3]{NZ} hold with respect to a fixed norm on $A$ and the norm on $B$ induced by the inclusion $i$. Thus according to \cite[Proposition 3.3]{NZ} there exists a subset $\Gamma_{A}$ of $A_{1}$ such that for each $\gamma \in
\Gamma_{A}$ the curved dg structure $D_{\gamma}$ on $A$ may be transferred to a curved $A_{\infty}$-structure $\widetilde{D}_{\gamma}$ on $B$. Moreover, by the homological perturbation lemma, for every $\gamma \in \Gamma_{A}$ the special contraction  $(p,i,H)$ induces a special contraction  $(P_{\gamma},I_{\gamma},H_{\gamma})$ from $T(sA)$ to $T(sB)$ such that $P_{\gamma}$ and $I_{\gamma}$ are morphisms of curved $A_{\infty}$-algebras. Thus the injection $I_{\gamma}$ induces an inner product on $T(sB)$ which we denote by $\langle \cdot ,\cdot \rangle_{B,\gamma}$.

We define a groupoid $\mathbf{G}_{A}$ by setting
$$\mathbf{G}_{A}=\bigl\{(\gamma, g, \gamma ')\in \Gamma_{A}\times G \times \Gamma_{A} \, | \, g \cdot \gamma = \gamma '\bigr\},$$
$$\mathbf{G}_{A}^{(0)}=\Gamma_{A}, \quad s(\gamma, g, \gamma ')=\gamma, \quad t(\gamma, g, \gamma ')=\gamma ',$$
and using the obvious composition law given by the multiplication in $G$. Exactly as in Subsection 3.3 one verifies the following
\begin{proposition}\label{functa} The assignment
$$  \Gamma_{A} \ni \gamma \mapsto
\bigl(A,\langle \cdot ,\cdot \rangle,\{D_{\gamma}\}_{\gamma \in \mathbf{G}_{A}^{(0)}}\bigr),$$
$$ \mathbf{G}_{A}  \ni (\gamma, g, \gamma ') \mapsto \rho(g)$$
defines an algebraic gauge theory $\mathcal{F}_{A}: \mathbf{G}_{A} \rightarrow \mathbf{J}$.
\end{proposition}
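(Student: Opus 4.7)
The plan is to verify the three ingredients of a functor in turn: the action on objects lies in the category $\mathbf{J}$, the action on arrows yields valid morphisms of $\mathbf{J}$, and composition and identities are preserved. Since every unit $\gamma \in \mathbf{G}_A^{(0)} = \Gamma_A$ is sent to the \emph{same} triple $\bigl(A, \langle \cdot, \cdot \rangle, \{D_\delta\}_{\delta \in \Gamma_A}\bigr)$, well-definedness on objects reduces to checking $\{D_\delta\}_{\delta \in \Gamma_A} \subseteq \mathfrak{B}_A$: each $D_\delta$ is a curved dg structure by Example \ref{gacdg}, hence a fortiori a curved $A_{\infty}$-structure, and the graded inner product on $T(sA)$ is supplied by the hypotheses of Proposition \ref{dgt}.

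The main content is the assignment on morphisms, where I send $(\gamma, g, \gamma') \in \mathbf{G}_A$ to the pair $\bigl(\rho(g), \rho(g^{-1})\bigr)$ of coalgebra endomorphisms of $T(sA)$ and verify the four conditions defining a morphism in $\mathbf{J}$. That $\rho(g) = F^g$ is a coalgebra morphism of $T(sA)$ is built into its construction in Example \ref{gacdg}; unit-preservation follows from the vanishing of the higher components $F^g_k$ for $k>1$, which forces $F^g$ to act trivially on the scalar summand $\textbf{k} \subset T(sA)$. For partial isometry, the standing hypothesis of Proposition \ref{dgt} that left and right multiplication on $A$ are mutually adjoint puts us precisely in the setting of Lemma \ref{norrep} with $\alpha$ the left-multiplication representation, so $\rho$ is unitary on $A$; the induced action on $T(sA)$ is then unitary as well, and in particular partially isometric. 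Finally, the identity $\rho(g)\, D_\delta\, \rho(g^{-1}) = D_{g \cdot \delta}$ recorded in Example \ref{gacdg} delivers the required inclusion $\rho(g)\{D_\delta\}_{\delta \in \Gamma_A}\rho(g^{-1}) \subseteq \{D_\delta\}_{\delta \in \Gamma_A}$, provided $\Gamma_A$ is invariant under the gauge action of $G$.

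Functoriality is then the representation property $\rho(g_2 g_1) = \rho(g_2)\rho(g_1)$, which was established in Example \ref{gacdg} via the composition formula (\ref{compmor}), together with the obvious fact that $1 \in G$ is sent to the identity coalgebra map (and these compose correctly with the componentwise composition law on pairs $(\phi,\psi)$ in $\mathbf{J}$). The step I expect to require the most care is the last clause of the morphism verification, namely the inclusion condition on $\mathfrak{C}$: as formulated in Definition \ref{agtg}, it demands that the \emph{whole} subset $\{D_\delta\}_{\delta \in \Gamma_A}$ be stable under conjugation by any $g$ appearing in $\mathbf{G}_A$, not merely the single $D_\gamma$ attached to the source of $(\gamma, g, \gamma')$. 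This is an implicit $G$-invariance assumption on $\Gamma_A$; it is natural because the norm bounds cutting out $\Gamma_A$ in the sense of \cite[Proposition 3.3]{NZ} are preserved under unitary conjugation, but it deserves to be recorded explicitly rather than absorbed into the phrase ``exactly as in Subsection 3.3''.
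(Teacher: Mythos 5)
Your proof is correct and follows essentially the same route as the paper: the paper's own ``proof'' is the single phrase ``Exactly as in Subsection 3.3 one verifies the following,'' deferring to the verification of Proposition \ref{dgt} --- unitarity of $\rho$ via Lemma \ref{norrep} applied to left and right multiplication, the conjugation identity $\rho(g)D_{\gamma}\rho(g^{-1})=D_{g\cdot\gamma}$ and the representation property from Example \ref{gacdg} and (\ref{compmor}) --- all of which you spell out explicitly.

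One remark on the caveat you flag, which is the only substantive point of divergence. You are right that the condition $\phi\,\mathfrak{C}_{V_{1}}\psi\subset\mathfrak{C}_{V_{2}}$ in the definition of $\mathbf{J}$, read literally, requires $g\cdot\delta\in\Gamma_{A}$ for \emph{every} $\delta\in\Gamma_{A}$ and every $g$ occurring in some arrow of $\mathbf{G}_{A}$, and the paper nowhere verifies this. However, your suggested justification --- that the bounds cutting out $\Gamma_{A}$ are preserved under unitary conjugation --- does not hold as stated: the gauge action is the affine map $\gamma\mapsto g\gamma g^{-1}+g\,dg^{-1}$, and the inhomogeneous term $g\,dg^{-1}$ can move $\gamma$ out of any norm ball; this is visible in the simplicial example, where $\Gamma_{e}$ is precisely a ball of radius $e$. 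So the $G$-invariance of $\Gamma_{A}$ really is an additional hypothesis that must either be imposed, or avoided by weakening the morphism condition in $\mathbf{J}$ so that an arrow $(\gamma,g,\gamma')$ is only required to carry $D_{\gamma}$ to $\mathfrak{C}_{V_{2}}$ rather than the whole of $\mathfrak{C}_{V_{1}}$. With either repair your argument goes through unchanged.
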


Next we show how this algebraic gauge theory can be transferred along the contraction  $(p,i,H)$. For every $(\gamma, g, \gamma ') \in \mathbf{G}_{A}$ one can transfer the morphism of curved dg algebras
$$\rho(g): (A, D_{\gamma}) \rightarrow (A, D_{\gamma '})$$
to a morphism of curved $A_{\infty}$-algebras
$$\widetilde{\rho}_{\gamma,\gamma '}(g):=P_{\gamma '}\rho(g)I_{\gamma}: (B, \widetilde{D}_{\gamma}) \rightarrow (B, \widetilde{D}_{\gamma '}).$$

\begin{theorem}\label{dgt2}Assume that for every $g \in G$ one has $[H, c(g)]=0$, where $c(g)$ denotes conjugation with $g$.
Then the assignment
$$  \Gamma_{A} \ni \gamma \mapsto
\bigl(B,\langle \cdot ,\cdot \rangle_{B, \gamma},\{\widetilde{D}_{\gamma}\}_{\gamma \in \mathbf{G}_{A}^{(0)}}\bigr),$$
$$ \mathbf{G}_{A}  \ni (\gamma, g, \gamma ') \mapsto \widetilde{\rho}_{\gamma,\gamma '}(g)$$
defines an algebraic gauge theory $\mathcal{F}_{B}: \mathbf{G}_{A} \rightarrow \mathbf{J}$. Moreover, there exists a natural transformation from $\mathcal{F}_{A}$ to $\mathcal{F}_{B}$ given by the assignment
$$ \mathbf{G}_{A}^{(0)}  \ni \gamma \mapsto (P_{\gamma}, I_{\gamma})\in
\text{Hom}(\mathcal{F}_{A}(\gamma),\mathcal{F}_{B}(\gamma)).$$
\end{theorem}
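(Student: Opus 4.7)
The strategy is to reduce the theorem to Lemmas \ref{complemma}, \ref{unilem} and \ref{philemma} of Section 4 by first lifting the hypothesis $[H, c(g)] = 0$ to a commutation between $\rho(g)$ and each perturbed homotopy $H_{\gamma}$ on $T(sA)$. Once that commutation is in hand, the remainder is a formal assembly. The hypothesis is used in exactly one place, and this first step is the main obstacle.

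For the lifting, the unperturbed homotopy $\mathbf{H}$ produced by the tensor-coalgebra construction on $T(sA)$ is built from $H$, $ip$ and $\mathbf{1}_{A}$, and by Example \ref{gacdg} the coalgebra map $\rho(g)$ acts on $(sA)^{\otimes n}$ as $c(g)^{\otimes n}$. The assumption $[H, c(g)] = 0$, combined with the fact that $c(g)$ is a chain map and $ip = \mathbf{1} + dH + Hd$, yields $[ip, c(g)] = 0$ as well, and a term-by-term comparison then gives $\rho(g)\,\mathbf{H} = \mathbf{H}\,\rho(g)$. Applying Lemma \ref{philemma} with $\varphi = \rho(g)$ and the perturbations $\delta_{\gamma}$, $\delta_{\gamma'}$ arising from the curvatures of $D_{\gamma}$ and $D_{\gamma'}$ upgrades this to
$$\rho(g)\, H_{\gamma} \, = \, H_{\gamma'}\, \rho(g) \qquad \text{for every } (\gamma, g, \gamma') \in \mathbf{G}_{A}.$$

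To verify functoriality of $\mathcal{F}_{B}$, the composition rule $\widetilde{\rho}_{\gamma', \gamma''}(g')\, \widetilde{\rho}_{\gamma, \gamma'}(g) = \widetilde{\rho}_{\gamma, \gamma''}(g'g)$ follows from Lemma \ref{complemma} applied with the three special contractions $(P_{\gamma}, I_{\gamma}, H_{\gamma})$, $(P_{\gamma'}, I_{\gamma'}, H_{\gamma'})$, $(P_{\gamma''}, I_{\gamma''}, H_{\gamma''})$ and the chain maps $\rho(g)$, $\rho(g')$, whose commutation hypotheses are precisely those established in the previous paragraph. Identity preservation is immediate from $P_{\gamma} I_{\gamma} = \mathbf{1}$, and each $\widetilde{\rho}_{\gamma, \gamma'}(g)$ is a unit-preserving coalgebra map because its three factors are, using Theorem \ref{pertlemma}(b). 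Partial isometry with respect to $\langle\cdot,\cdot\rangle_{B, \gamma}$ and $\langle\cdot,\cdot\rangle_{B, \gamma'}$ is given by Lemma \ref{unilem}, since $\rho(g)$ is unitary on $T(sA)$. Finally, the preferred class $\{\widetilde{D}_{\gamma}\}_{\gamma \in \Gamma_{A}}$ is stable because $\widetilde{\rho}_{\gamma, \gamma'}(g)$, being the composition of three morphisms of curved $A_{\infty}$-algebras, conjugates $\widetilde{D}_{\gamma}$ into $\widetilde{D}_{\gamma'}$.

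For the natural transformation, each pair $(P_{\gamma}, I_{\gamma})$ belongs to $\text{Hom}(\mathcal{F}_{A}(\gamma), \mathcal{F}_{B}(\gamma))$ in $\mathbf{J}$: both are unit-preserving coalgebra maps by Theorem \ref{pertlemma}(b), $I_{\gamma}$ is isometric by the very definition of $\langle\cdot,\cdot\rangle_{B, \gamma}$, and $P_{\gamma} I_{\gamma} = \mathbf{1}$. The naturality squares reduce to the two identities
$$\widetilde{\rho}_{\gamma, \gamma'}(g)\, P_{\gamma} = P_{\gamma'}\, \rho(g), \qquad I_{\gamma}\, \widetilde{\rho}_{\gamma', \gamma}(g^{-1}) = \rho(g^{-1})\, I_{\gamma'},$$
each of which is obtained, in the spirit of the proofs of Lemmas \ref{complemma} and \ref{adjlem}, by expanding $I_{\gamma} P_{\gamma} = \mathbf{1} + d_{\gamma} H_{\gamma} + H_{\gamma} d_{\gamma}$, pushing $\rho(g)$ across via the commutation relation established above together with the chain-map property, and using the annihilation identities $PH = 0$ and $HI = 0$ to kill the two correction terms.
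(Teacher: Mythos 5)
Your proposal follows essentially the same route as the paper's proof: lift $[H,c(g)]=0$ to $[T(H),\rho(g)]=0$ on the tensor coalgebra, apply Lemma \ref{philemma} to obtain $\rho(g)H_{\gamma}=H_{\gamma'}\rho(g)$, then invoke Lemma \ref{complemma} for the groupoid composition law, Lemma \ref{unilem} (with Theorem \ref{pertlemma}) for the isometry property, and verify the naturality squares by expanding $I_{\gamma}P_{\gamma}=\mathbf{1}+d_{\gamma}H_{\gamma}+H_{\gamma}d_{\gamma}$ and killing the correction terms with the annihilation conditions. The only point where you go beyond the paper is in justifying $[T(H),\rho(g)]=0$, and there your intermediate claim that $[ip,c(g)]=0$ follows because ``$c(g)$ is a chain map'' is imprecise ($c(g)$ intertwines the perturbed differentials $m_{1}^{\gamma}$ and $m_{1}^{\gamma'}$, not $d$ itself), though the paper simply asserts this commutation without argument.
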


\begin{proof}
Recall that the special contraction  $(p,i,H)$ induces a special contraction $(T(p),T(i),T(H))$ from $T(sA)$ to $T(sB)$. The condition $[H, c(g)]=0$ implies that $[T(H), \rho(g)]=0$. Applying Lemma \ref{philemma} to the maps $\rho(g)$, one obtains
$$H_{\gamma'} \rho(g)=\rho(g) H_{\gamma}$$ for $\gamma '=g\cdot \gamma$. Using this, Theorem \ref{pertlemma} and the unitarity of $\rho(g)$, one concludes that by virtue of Lemma \ref{unilem} the operators $\widetilde{\rho}_{\gamma,\gamma '}(g)$ are isometries. Lemma \ref{complemma} implies that

$$\widetilde{\rho}_{\gamma',\gamma ''}(g_{2})\widetilde{\rho}_{\gamma,\gamma '}(g_{1})
=\widetilde{\rho}_{\gamma,\gamma ''}(g_{2}g_{1})$$

for $\gamma '=g_{1}\cdot \gamma$ and  $\gamma ''=g_{2}\cdot \gamma '$. Thus the maps $\widetilde{\rho}_{\gamma,\gamma '}(g)$ define a representation of the groupoid $\mathbf{G}_{A}$ on the collection
$\{B\}_{\gamma \in \mathbf{G}_{A}^{(0)}}$ and hence a representation on $\{T(sB)\}_{\gamma \in \mathbf{G}_{A}^{(0)}}$.
This implies that $\widetilde{\rho}_{\gamma,\gamma '}(g)$ are isometric isomorphisms.

To prove that the assignment $$ \mathbf{G}_{A}^{(0)}  \ni \gamma \mapsto (P_{\gamma}, I_{\gamma})\in
\text{Hom}(\mathcal{F}_{A}(\gamma),\mathcal{F}_{B}(\gamma))$$
defines a natural transformation from $\mathcal{F}_{A}$ to $\mathcal{F}_{B}$ we first observe that
$$P_{\gamma}D_{\gamma}I_{\gamma}=\widetilde{D}_{\gamma}$$
and then show that the identities
$$P_{\gamma '}\rho(g)=\widetilde{\rho}_{\gamma,\gamma '}(g)P_{\gamma},$$
$$\rho(g)I_{\gamma}=I_{\gamma '}\widetilde{\rho}_{\gamma,\gamma '}(g),$$
$$P_{\gamma'}\rho(g)D_{\gamma'}\rho^{-1}(g)I_{\gamma'}=
\widetilde{\rho}_{\gamma,\gamma '}(g)P_{\gamma}D_{\gamma}I_{\gamma}\widetilde{\rho}^{\,-1}_{\gamma,\gamma '}(g)$$
hold for $\gamma '=g\cdot \gamma$. These are verified as above, using the annihilation conditions and the identities
$\widetilde{D}_{\gamma}P_{\gamma}=P_{\gamma}D_{\gamma}$ and $I_{\gamma}\widetilde{D}_{\gamma}=D_{\gamma}I_{\gamma}$.
\end{proof}

\subsection{Example: A simplicial gauge theory} Here we consider a special case of the example from the previous subsection in order to define a discretization of classical gauge theory. Let $M$ be a compact Riemannian manifold and let $K$ be a smooth triangulation of $M$. We denote by $C^{\bullet}(K)$ the space of the real or complex simplicial cochains on $K$ and by $\Omega^{\bullet}(K)$ the space of the real or complex piece-wise smooth differential forms on $K$ (cf. \cite{Du}). We further denote by $\textbf{M}_{l}$ the algebra of the real or complex $l\times l$-matrices and by $C^{\bullet}(K,\textbf{M}_{l})$ and $\Omega^{\bullet}(K,\textbf{M}_{l})$ the spaces of matrix-valued cochains and forms.

Recall that there exists a chain contraction from $\Omega^{\bullet}(K)$ to $C^{\bullet}(K)$ (cf. \cite [Theorem 2.16] {Du} and that it is proved in \cite[Section 3]{G} that this contraction is special. Clearly this contraction extends to a contraction from $\Omega^{\bullet}(K,\textbf{M}_{l})$ to $C^{\bullet}(K,\textbf{M}_{l})$.

We define an $L^{2}$-inner product on $\Omega^{\bullet}(K,\textbf{M}_{l})$ as in Subsection 3.1, fix $e>0$ and set
$$\Gamma_{e}=\bigl\{\gamma \in  \Omega^{1}(K,\textbf{M}_{l}) \, | \, \|\gamma\|\leq e   \bigr\},$$
where $\|\cdot \|$ is the norm induced by the inner product. It is shown  in \cite[Example 3.9]{NZ} that there is a fine enough subdivision $K_{e}$ of $K$ such that every $\gamma \in \Gamma_{e}$ defines a transferred curved $A_{\infty}$-structure ${D}^{e}_{\gamma}$ on
$C^{\bullet}(K_{e},\textbf{M}_{l})$ as in subsection 4.1, using Dupont's contraction.

Let $G$ be a subgroup of the group of all invertible elements of $\Omega^{0}(K,\textbf{M}_{l})$. We define a family of graded inner products $\langle \cdot ,\cdot \rangle_{e, \gamma}$
on $T(sC^{\bullet}(K_{e},\textbf{M}_{l}))$, a groupoid $\mathbf{G}_{e}$ with unit space $\Gamma_{e}$ and a representation $\rho^{e}=\{\rho^{e}_{\gamma,\gamma '}(g)\}_{(\gamma,g,\gamma ') \in \mathbf{G}_{e}}$ of $\mathbf{G}_{e}$ on $T(sC^{\bullet}(K_{e},\textbf{M}_{l}))$ exactly as in subsection 4.1.

As in  Proposition \ref{functa}, we obtain an algebraic gauge theory $\mathcal{F}_{\Omega}: \mathbf{G}_{e} \rightarrow \mathbf{J}$ given by the assignment
$$  \Gamma_{e} \ni \gamma \mapsto
\bigl(\Omega^{\bullet}(K,\textbf{M}_{l}),\langle \cdot ,\cdot \rangle,\{D_{\gamma}\}_{\gamma \in \Gamma_{e}}
\bigr),$$
$$ \mathbf{G}_{e}  \ni (\gamma, g, \gamma ') \mapsto \rho(g).$$

\begin{theorem}\label{sgt} The assignment
$$  \Gamma_{e} \ni \gamma \mapsto
\bigl(C^{\bullet}(K_{e},\textbf{M}_{l}),\langle \cdot ,\cdot \rangle_{e, \gamma},\{D^{e}_{\gamma}\}_{\gamma \in \Gamma_{e}}\bigr),$$
$$ \mathbf{G}_{e}  \ni (\gamma, g, \gamma ') \mapsto \rho^{e}_{\gamma,\gamma '}(g)$$
defines an algebraic  gauge theory $\mathcal{F}_{C}: \mathbf{G}_{e} \rightarrow \mathbf{J}$. Moreover, there exists a natural transformation from $\mathcal{F}_{\Omega}$ to $\mathcal{F}_{C}$.
\end{theorem}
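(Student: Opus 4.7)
The plan is to derive Theorem \ref{sgt} as a direct instance of Theorem \ref{dgt2}. I would set $A = \Omega^{\bullet}(K_{e}, \textbf{M}_{l})$ (noting that piecewise-smooth forms on $K$ and on any subdivision $K_{e}$ coincide), take $G$ as in the statement, use the $L^{2}$-inner product from Subsection 3.1, and take the special chain contraction to be Dupont's contraction from \cite[Theorem 2.16]{Du} extended entrywise to matrix-valued forms. Under these identifications the groupoid $\mathbf{G}_{A}$ of Proposition \ref{functa} coincides with $\mathbf{G}_{e}$ and $\Gamma_{A}$ with $\Gamma_{e}$, so the algebraic gauge theory $\mathcal{F}_{\Omega}$ is exactly the one produced by Proposition \ref{functa}.

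Next I would verify the two structural hypotheses of Theorem \ref{dgt2} that are already available in the cited literature: Dupont's contraction is special by \cite[Section 3]{G}, and by \cite[Example 3.9]{NZ} the choice of bound $e$ together with a sufficiently fine subdivision $K_{e}$ of $K$ guarantees that for each $\gamma \in \Gamma_{e}$ the curved dg structure $D_{\gamma}$ transfers to the claimed $A_{\infty}$-structure $D^{e}_{\gamma}$ on $C^{\bullet}(K_{e}, \textbf{M}_{l})$, and that the perturbed contraction $(P_{\gamma}, I_{\gamma}, H_{\gamma})$ on the tensor coalgebras exists with $P_{\gamma}$ and $I_{\gamma}$ being morphisms of curved $A_{\infty}$-algebras.

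The main obstacle will be verifying the remaining hypothesis of Theorem \ref{dgt2}, namely $[H, c(g)] = 0$ for every $g \in G$. Dupont's homotopy $H$ on $\Omega^{\bullet}(K_{e})$ is built from integrations against Whitney forms and extends to $\textbf{M}_{l}$-valued forms by acting entrywise, while conjugation $c(g)(\omega) = g\omega g^{-1}$ is pointwise multiplication by the matrix-valued function $g$. These two operations commute precisely when $g$ is scalar with respect to the integrations defining $H$ --- for instance when $g$ lies in the subgroup of globally constant matrices in $\Omega^{0}(K,\textbf{M}_{l})$, or more generally of functions that are locally constant on the simplices of $K_{e}$. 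The proof should therefore be understood as restricting $G$ to such a subgroup, in which case $H$ is $\textbf{M}_{l}$-linear in the appropriate sense and the commutator vanishes trivially.

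Once $[H, c(g)] = 0$ is in place, Theorem \ref{dgt2} applies verbatim and delivers both conclusions of Theorem \ref{sgt} simultaneously: the assignment sending $(\gamma, g, \gamma ') \in \mathbf{G}_{e}$ to $\rho^{e}_{\gamma,\gamma '}(g) = P_{\gamma '}\rho(g)I_{\gamma}$ defines the functor $\mathcal{F}_{C}$, and the collection $\gamma \mapsto (P_{\gamma}, I_{\gamma})$ is the promised natural transformation from $\mathcal{F}_{\Omega}$ to $\mathcal{F}_{C}$. No further computation is required beyond this specialization.
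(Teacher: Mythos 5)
Your overall strategy coincides exactly with the paper's: the proof of Theorem \ref{sgt} is nothing but a specialization of Theorem \ref{dgt2} to Dupont's contraction, with specialness supplied by \cite[Section 3]{G} and the existence of the transferred structures $D^{e}_{\gamma}$ and of the contractions $(P_{\gamma},I_{\gamma},H_{\gamma})$ supplied by \cite[Example 3.9]{NZ}, so the only point requiring an argument is the hypothesis $[H,c(g)]=0$. It is precisely here that you and the paper part ways. The paper's entire proof is the assertion that the homotopy in Dupont's contraction ``commutes with multiplication by $0$-forms'' and hence with conjugation by any $g$, with no restriction on the subgroup $G\subset \Omega^{0}(K,\textbf{M}_{l})^{\times}$; you instead argue that the commutation holds only for (locally) constant $g$ and propose to shrink $G$ accordingly. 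As written, your argument therefore establishes a strictly weaker statement than the theorem, which permits $G$ to be an arbitrary subgroup of the invertible $0$-forms.

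That said, your skepticism appears to be mathematically justified, and in a way that does not even depend on the precise formula for Dupont's homotopy. Already in the local model of a single $1$-simplex with coordinate $t$, \emph{any} special contraction $(p,i,H)$ with $i$ the Whitney map and $p$ integration over simplices satisfies $dH(dt)=(ip-1)(dt)=0$, so $H(dt)$ is a constant function, and the side condition $pH=0$ forces $H(dt)=0$; on the other hand $dH(t\,dt)=(ip-1)(t\,dt)=\tfrac12\,dt-t\,dt\neq 0$, so $H(t\,dt)$ is non-constant and in particular $H(t\,dt)\neq t\cdot H(dt)=0$. Conjugating $E_{12}\,dt$ by $g=\mathrm{diag}(1+t,1)$ then gives $H(c(g)(E_{12}\,dt))=H(t\,dt)E_{12}\neq 0=c(g)(H(E_{12}\,dt))$, i.e.\ $[H,c(g)]\neq 0$. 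So the paper's one-line justification does not hold for non-constant $g$, and the restriction you impose --- to constant matrices, which on a connected complex is what ``locally constant on simplices'' amounts to by continuity --- is what the transfer argument actually supports. You should state this restriction on $G$ as an explicit hypothesis rather than as an ``understanding'' of the theorem, and note that the unrestricted statement would require either a different homotopy commuting with $\Omega^{0}$ or a direct verification of the conclusions of Theorem \ref{dgt2} not passing through Lemma \ref{philemma}.
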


\begin{proof} It follows directly from the definition of the homotopy in Dupont's contraction given in \cite[Chapter 2]{Du} that it commutes with multiplication by $0$-forms, hence it commutes with conjugations by elements of $G$ and Theorem \ref{dgt2} is applicable.
\end{proof}

\end{document}